%%%%%%%%%%%%%%%%%%%% author.tex %%%%%%%%%%%%%%%%%%%%%%%%%%%%%%%%%%%
%
% sample root file for your "contribution" to a contributed volume
%
% Use this file as a template for your own input.
%
%%%%%%%%%%%%%%%% Springer %%%%%%%%%%%%%%%%%%%%%%%%%%%%%%%%%%

% RECOMMENDED %%%%%%%%%%%%%%%%%%%%%%%%%%%%%%%%%%%%%%%%%%%%%%%%%%%
\documentclass[graybox]{svmult}

\usepackage{helvet}         % selects Helvetica as sans-serif font
\usepackage{courier}        % selects Courier as typewriter font
\usepackage{type1cm}        % activate if the above 3 fonts are
                            % not available on your system
%
\usepackage{makeidx}         % allows index generation
\usepackage{graphicx}        % standard LaTeX graphics tool
                             % when including figure files
\usepackage{multicol}        % used for the two-column index
\usepackage[bottom]{footmisc}% places footnotes at page bottom

\usepackage{amsmath,amssymb}

\newcommand{\Natural}{\mathbb{N}}
\newcommand{\FF}{\mathcal{F}}
\newcommand{\GG}{Q}
\newcommand{\HH}{P}
\newcommand{\dx}{\mathrm d^4x}
\newcommand{\nmu}{\nabla_{\mu}}
\newcommand{\nnu}{\nabla_{\nu}}
\DeclareMathOperator{\Hess}{Hess}
%\theoremstyle{plain}
%\newtheorem{teorema}{Theorem}[section]
%\newtheorem{posledica}{Posledica}[chapter]
%\newtheorem{lema}{Lemma}[section]

%\theoremstyle{definition}
%\newtheorem{definicija}{Definition}[section]
%\newtheorem{primer}{Primer}[chapter]

%%% PUT YOUR DEFINITIONS HERE - BEFORE \begin{document}

\begin{document}

\title*{Variations of Infinite Derivative Modified  Gravity}
% Use \titlerunning{Short Title} for an abbreviated version of
% your contribution title if the original one is too long
\author{Ivan Dimitrijevic, Branko Dragovich, Zoran Rakic and Jelena Stankovic}
% Use \authorrunning{Short Title} for an abbreviated version of
% your contribution title if the original one is too long
\institute{Ivan Dimitrijevic \at Faculty of Mathematics, University of Belgrade,  Belgrade, Serbia, \email{ivand@matf.bg.ac.rs}
\and Branko Dragovich \at Institute of Physics, University of Belgrade,  Belgrade, Serbia,\\ Mathematical Institute, Serbian Academy of Sciences and Arts, Belgrade, Serbia \email{dragovich@ipb.ac.rs}
\and Zoran Rakic \at Faculty of Mathematics, University of Belgrade,  Belgrade, Serbia, \email{zrakic@matf.bg.ac.rs}
\and Jelena Stankovic \at Teacher Education Faculty, University of Belgrade, Belgrade, Serbia \email{jelenagg@gmail.com}}
\maketitle

\abstract{We consider nonlocal modified Einstein gravity without matter, where nonlocal term has the
form  $\HH(R) \FF(\Box) \GG(R)$. For this model, in this paper we give the derivation of the equations of motion
in detail. This is not  an easy  task and  presented derivation should be useful to a researcher who wants  to investigate nonlocal gravity.
Also, we present the second variation of the related Einstein-Hilbert modified action and basics of gravity perturbations.}

\section{Introduction}
General relativity \cite{wald}, which is Einstein theory of gravity, is dominant theory of gravitational phenomena for more than last hundred years.
It is one of the most attractive and phenomenologically successful physical theories. General relativity is perfectly confirmed in the Solar system. Among many important predictions are
gravitational red shift, gravitational lensing, gravitational waves and black holes.

Although very successful, Einstein  gravity is not a complete and final theory of gravitational phenomena. According to its cosmological solutions, the universe contains initial singularity. This singularity is a serious and still unsolved problem, which requires an adequate Einstein gravity modification. General relativity  predicts that the universe contains about $68\%$ of {\it dark energy}, $27\%$ of {\it dark matter} and only about $5\%$ of {\it visible matter}. However, dark energy and dark matter are not yet experimentally confirmed. Also, Einstein gravity has not been verified at very large cosmic scales. Hence, some cosmological predictions, including energy/matter content of the universe, based on Einstein gravity should be taken with caution. One has also to mention problem of quantization of general relativity. Hence, it follows that Einstein gravity has some theoretical and phenomenological problems in ultraviolet and infrared regions.

Unfortunately, a new theoretical principle which would tell us  which is right  direction to modify gravity  is not yet uncovered.
Hence, there are  many  modifications of Einstein gravity,
which are motivated by problems in quantum gravity, string theory, astrophysics and cosmology (for a  review, see \cite{clifton,nojiri,novello,faraoni,nojiri1}).

One of recent and very promising directions of research is {\it nonlocal modified gravity} with its applications to cosmology (as a review, see
 \cite{nojiri,nojiri1,woodard,maggiore,dimitrijevic6,dragovich0}). Potentially there is a huge number of possibilities to construct a nonlocal gravity model  by  replacement of the scalar
curvature $R$ in the Einstein-Hilbert action by a scalar
function $F (R, \Box, \Box^{-1}, R_{\mu\nu}R^{\mu\nu}, R_{\mu\nu\alpha\beta}R^{\mu\nu\alpha\beta}, \ldots)$, where $\Box = \nabla_{\mu}\nabla^{\mu}$
is d'Alembert operator and $\nabla_{\mu}$ denotes the covariant derivative. In this paper, nonlocality means that gravitational Lagrangian
contains an infinite number of space-time derivatives, i.e.
derivatives up to an infinite order in the form of d'Alembert
operator $\Box,$ which is argument of an analytic function.  Note that higher derivative gravity theories improve problems with quantization of general relativity, see, e.g. \cite{stelle,koshelev-1,modesto1,modesto2}.

In the sequel we consider nonlocal modification of gravity where Einstein-Hilbert action contains an additional nonlocal term of the form $\HH(R) \FF(\eta\Box)\GG(R),$ where $\eta$ is a parameter and $\eta\Box$ is dimensionless.
In fact we consider a class of nonlocal gravity models without matter
given by the  action
\begin{align}\label{action}
  S &= \frac{1}{16\pi G} \int_M \left(R-2\Lambda + \HH(R) \FF(\eta \Box) \GG(R) \right)\; \sqrt{-g} \; \dx,
\end{align}
where $M$ is a pseudo-Riemannian manifold of signature $(1,3)$ with metric $(g_{\mu\nu})$, $\mathcal{F}(\eta\Box)= \displaystyle \sum_{n =0}^{\infty}
f_{n} (\eta\Box)^{n}$, $\HH$ and $\GG$ are differentiable functions of
the scalar curvature $R$ and $\Lambda$ is cosmological constant. Inspiration for an analytic form of the function $\mathcal{F}(\eta\Box)$ comes from string theory, in particular from $p$-adic string theory, which is a part of $p$-adic mathematical physics, for a recent review see \cite{dragovich1}.  The corresponding
Einstein equations of motion have complex structure. In this paper we will
present their derivation, because it is not an easy task and it should be useful to a reader interested in this subject. It is  also useful  to see \cite{biswas4}. In order to obtain equations of motion for
$g_{\mu\nu}$ we have to find the variation of the action
\eqref{action} with respect to metric $g^{\mu\nu}$. In addition we also find the second variation of the action \eqref{action} and consider some cosmological perturbations. For simplicity, in the sequel we shall take $\eta = 1.$

Before to proceed with derivation of equations of motion for the above model \eqref{action}, it is worth to mention some other nonlocal models with inverse d'Alembert operator, i. e. with
$ \Box^{-n},$ which are proposed to explain the late time cosmic acceleration without dark energy. Such models have the form
\begin{equation} \label{action2}
S = \frac{1}{16\pi G} \int \sqrt{-g} \left(  R + L_{NL}  \right) \, d^4x ,
\end{equation}
where two typical examples are: $L_{NL} = R\, f(\Box^{-1}R)$ (see a review \cite{nojiri,woodard} and references therein), and  $ L_{NL} = -\frac{1}{6} m^2 R \Box^{-2} R $ (see a review \cite{maggiore} and references therein).

Nonlocal models with $\mathcal{F}(\Box)= \displaystyle \sum_{n =0}^{\infty}
f_{n} \Box^{n}$ are mainly considered to improve general relativity in its ultraviolet region, unlike models with $ \Box^{-1}$ and $ \Box^{-2}$ which intend to modify  gravity in its infrared sector. It may happen that there will be more than one modification of general relativity, which are valid at the different scales. Namely, any physical theory has a domain of validity, which depends on some conditions, including spatial scale and  complexity of the system. It is natural that validity of general relativity is also restricted. At very short and very large cosmic distances may act different gravity theories.

  Section 2 contains variation of curvature tensors from pseudo-Riemannian geometry, what is necessary for derivation of equations of motion for $g_{\mu\nu}$ in Sect. 3. Second variation of  gravity modified action \eqref{action} is presented in Sect. 4. Basics of cosmic perturbations are subject of Sect. 5. Sect. 6 contains some concluding remarks.

\section{Variation of curvature tensors}

Let us start with a technical lemma:
\begin{lemma} \label{lem:deltag1}
  The following relations hold
  \begin{align}
    \delta g &= g g^{\mu\nu} \delta g_{\mu\nu} = - g g_{\mu\nu} \delta g^{\mu\nu},\label{lem:deltag.1} \\
    \delta \sqrt{-g} &= - \frac 12 g_{\mu\nu} \sqrt{-g} \delta g^{\mu\nu}, \label{lem:deltag.2}\\
    \delta \Gamma_{\mu\nu}^\lambda &= -\frac 12 \left( g_{\nu\alpha} \nabla_\mu \delta g^{\lambda \alpha} + g_{\mu\alpha} \nabla_\nu \delta g^{\lambda \alpha} - g_{\mu\alpha} g_{\nu\beta} \nabla^\lambda \delta g^{\alpha\beta}\right),\label{lem:deltag.3}
  \end{align}

  where  $g$ is  the determinant of the metric tensor. \\
\end{lemma}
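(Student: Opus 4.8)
The plan is to establish the three identities in order, since each one feeds into the next. For \eqref{lem:deltag.1}, I would invoke Jacobi's formula for the variation of a determinant, which follows from cofactor expansion: writing $g = \det(g_{\mu\nu})$ and recalling that the cofactor of $g_{\mu\nu}$ equals $g\, g^{\mu\nu}$, one obtains $\delta g = g\, g^{\mu\nu}\delta g_{\mu\nu}$ at once (equivalently, $\delta \ln(-g) = g^{\mu\nu}\delta g_{\mu\nu}$). The second equality comes from differentiating the defining relation $g^{\mu\nu}g_{\nu\lambda} = \delta^\mu_\lambda$, which yields $\delta g^{\mu\nu}\, g_{\nu\lambda} + g^{\mu\nu}\delta g_{\nu\lambda} = 0$; setting $\lambda = \mu$, summing, and using the symmetry of the metric gives $g^{\mu\nu}\delta g_{\mu\nu} = -g_{\mu\nu}\delta g^{\mu\nu}$.

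For \eqref{lem:deltag.2}, I would simply apply the chain rule, $\delta\sqrt{-g} = -\tfrac{1}{2\sqrt{-g}}\,\delta g$, and substitute the second form of \eqref{lem:deltag.1}. The factor $g/\sqrt{-g} = -\sqrt{-g}$, valid since $g<0$ in signature $(1,3)$, then produces precisely the claimed coefficient $-\tfrac12$.

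The main work, and the main obstacle, is \eqref{lem:deltag.3}. The crucial observation is that although the Christoffel symbol $\Gamma^\lambda_{\mu\nu}$ is not a tensor, its variation $\delta\Gamma^\lambda_{\mu\nu}$ is a genuine $(1,2)$-tensor, being the difference of two connection coefficients. This permits me to evaluate it in geodesic normal coordinates about an arbitrary point, where $\Gamma^\lambda_{\mu\nu}=0$ and partial derivatives coincide with covariant ones. Varying $\Gamma^\lambda_{\mu\nu} = \tfrac12 g^{\lambda\sigma}(\partial_\mu g_{\sigma\nu} + \partial_\nu g_{\sigma\mu} - \partial_\sigma g_{\mu\nu})$, the terms in which $\delta g^{\lambda\sigma}$ multiplies a first derivative of the metric drop out at the chosen point, leaving $\delta\Gamma^\lambda_{\mu\nu} = \tfrac12 g^{\lambda\sigma}(\nabla_\mu \delta g_{\sigma\nu} + \nabla_\nu \delta g_{\sigma\mu} - \nabla_\sigma \delta g_{\mu\nu})$. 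Because both sides are tensors, this equation then holds in every coordinate system and at every point.

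Finally, to recast the result in terms of the inverse-metric variation appearing in the statement, I would convert lower to upper indices through $\delta g_{\sigma\nu} = -g_{\sigma\alpha}g_{\nu\beta}\delta g^{\alpha\beta}$ and use metric compatibility $\nabla_\rho g_{\mu\nu}=0$ to slide the metric factors through the covariant derivatives. Contracting the free $g^{\lambda\sigma}$ against one metric factor in each of the three terms, and writing $\nabla^\lambda = g^{\lambda\sigma}\nabla_\sigma$ for the raised index of the last term, then reproduces \eqref{lem:deltag.3} exactly, including the opposite sign of the third term. I expect the index bookkeeping in this final conversion to be the only genuinely delicate step.
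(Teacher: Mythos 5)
Your proof is correct, and for the first two identities it coincides with the paper's: the paper also derives \eqref{lem:deltag.1} from the cofactor expansion of $g$ (together with varying $g_{\mu\nu}g^{\mu\nu}=4$, which is the contracted form of your $\delta(g^{\mu\nu}g_{\nu\lambda})=0$), and \eqref{lem:deltag.2} by the same chain-rule step. Where you genuinely diverge is on \eqref{lem:deltag.3}. The paper works entirely in general coordinates: it varies $\Gamma^\lambda_{\mu\nu}=\tfrac12 g^{\lambda\kappa}(\partial_\mu g_{\kappa\nu}+\partial_\nu g_{\mu\kappa}-\partial_\kappa g_{\mu\nu})$, rewrites the partial derivatives of $\delta g_{\mu\nu}$ as covariant derivatives plus Christoffel terms, and then checks explicitly that the term $\tfrac12\,\delta g^{\lambda\kappa}(\partial_\mu g_{\kappa\nu}+\partial_\nu g_{\mu\kappa}-\partial_\kappa g_{\mu\nu})$ cancels against the leftover connection term $g^{\lambda\kappa}\Gamma^\alpha_{\mu\nu}\delta g_{\kappa\alpha}$ by means of $\delta g_{\mu\nu}=-g_{\mu\alpha}g_{\nu\beta}\delta g^{\alpha\beta}$. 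You instead invoke the tensoriality of $\delta\Gamma^\lambda_{\mu\nu}$ (a derivative of a difference of connections) and evaluate in geodesic normal coordinates, where both the Christoffel symbols and the first derivatives of the metric vanish at the chosen point, so the cancellation never has to be exhibited; tensoriality then promotes the pointwise identity to all points and coordinate systems. Your route buys a much shorter computation at the cost of two extra inputs (the tensor character of $\delta\Gamma$ and the standard properties of normal coordinates), while the paper's brute-force route is self-contained and makes the cancellation mechanism visible, which is arguably instructive given the paper's stated pedagogical aim. Both arguments then finish with the identical index-lowering step via $\delta g_{\sigma\nu}=-g_{\sigma\alpha}g_{\nu\beta}\delta g^{\alpha\beta}$ and metric compatibility, and your sign bookkeeping there is right.
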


\begin{proof}
\vspace{-5mm}
Determinant $g$  can be written as
\begin{equation}
g_{\mu\nu}G^{(\alpha,\nu)}= g \delta_{\mu}^{\alpha},
\end{equation}
where $ G^{(\mu,\nu)}$ is the algebraic cofactor of the element $g_{\mu\nu}$.\\
Thus,
\begin{equation}
g^{\mu\nu}= \frac{G^{(\mu,\nu)}}{g}.
\end{equation}
Since  $G^{\mu,\nu}$ is independent of $g_{\mu\nu}$ we obtain the first part of the equation \eqref{lem:deltag.1}
\begin{equation}
\delta g = g g^{\mu\nu}\delta g_{\mu\nu}.
\end{equation}
Moreover from  $ g_{\mu\nu}g^{\mu\nu}= 4 $ and Leibniz rule we obtain $g^{\mu\nu}\delta
g_{\mu\nu}= - g_{\mu\nu}\delta g^{\mu\nu}$, which completes the proof of \eqref{lem:deltag.1}. \\
To prove the equation \eqref{lem:deltag.2} we proceed in the following way
\begin{equation}
\delta \sqrt{-g}= - \frac{1}{2 \sqrt{-g}}\delta g = - \frac 12 g_{\mu\nu} \sqrt{-g} \delta g^{\mu\nu}.
\end{equation}
The third equation is proved by
\begin{align}
  \nabla_\lambda g_{\mu\nu} &= \partial_\lambda g_{\mu\nu} - \Gamma_{\lambda\mu}^\kappa g_{\kappa\nu} - \Gamma_{\lambda\nu}^\kappa g_{\mu\kappa}, \\
  \delta \Gamma_{\mu\nu}^{\lambda}&= \frac 12 \delta g^{\lambda\kappa}\left(\partial_\mu g_{\kappa\nu} + \partial_\nu g_{\mu\kappa} - \partial_\kappa g_{\mu\nu} \right)+ \frac 12 g^{\lambda\kappa}\left(\partial_\mu \delta g_{\kappa\nu} + \partial_\nu \delta g_{\mu\kappa} - \partial_\kappa \delta g_{\mu\nu} \right).
\\%\end{align} \begin{align}
\delta \Gamma_{\mu\nu}^{\lambda}&= \frac 12 \delta g^{\lambda\kappa}\left(\partial_\mu g_{\kappa\nu} + \partial_\nu g_{\mu\kappa} - \partial_\kappa g_{\mu\nu} \right)+ \frac 12 g^{\lambda\kappa}\left(\partial_\mu \delta g_{\kappa\nu} + \partial_\nu \delta g_{\mu\kappa} - \partial_\kappa \delta g_{\mu\nu} \right) \\
& = \frac 12 \delta g^{\lambda\kappa}\left(\partial_\mu g_{\kappa\nu} + \partial_\nu g_{\mu\kappa} - \partial_\kappa g_{\mu\nu} \right)
+ \frac 12 g^{\lambda\kappa} \left(\nabla_\mu\delta g_{\kappa\nu}+ \nabla_{\nu}\delta g_{\mu\kappa} - \nabla_{\kappa}\delta g_{\mu\nu}  \right) \nonumber \\
& + g^{\lambda\kappa} \Gamma_{\mu\nu}^{\alpha}\delta g_{\kappa \alpha} \\
& = \frac 12 g^{\lambda\kappa} \left(\nabla_\mu\delta g_{\kappa\nu}+ \nabla_{\nu}\delta g_{\mu\kappa} - \nabla_{\kappa}\delta g_{\mu\nu}  \right).
\end{align}
In the last step we used $\delta g_{\mu\nu}= - g_{\mu \alpha}g_{\nu \beta}\delta g^{\alpha\beta}$. Using the same equation in every term of the last equation we obtain \eqref{lem:deltag.3}:
  \begin{align}
\delta \Gamma_{\mu\nu}^{\lambda}&= \frac 12 g^{\lambda\kappa} \left(\nabla_\mu(-g_{\kappa\alpha} g_{\beta\nu} \delta g^{\alpha\beta})+ \nabla_{\nu}(-g_{\mu\alpha}g_{\kappa\beta}\delta g^{\alpha\beta}) - \nabla_{\kappa}(-g_{\mu\alpha}g_{\nu\beta}\delta g^{\alpha\beta})  \right) \\
&= -\frac 12 \left(\delta_\alpha^\lambda g_{\beta\nu}\nabla_\mu\delta g^{\alpha\beta}+ \delta_{\beta}^\lambda g_{\mu\alpha}\nabla_{\nu}\delta g^{\alpha\beta} - g_{\mu\alpha}g_{\nu\beta}\nabla^{\lambda}\delta g^{\alpha\beta}  \right) \\
&= -\frac 12 \left( g_{\nu\alpha} \nabla_\mu \delta g^{\lambda \alpha} + g_{\mu\alpha} \nabla_\nu \delta g^{\lambda \alpha} - g_{\mu\alpha} g_{\nu\beta} \nabla^\lambda \delta g^{\alpha\beta}\right).
\end{align}

Note, that $\delta \Gamma_{\mu\lambda}^{\lambda} = -\frac 12 g_{\lambda\alpha} \nabla_\mu \delta g^{\lambda \alpha}$.
\end{proof}

\begin{lemma} \label{lem:deltag}
  The variation of Riemman tensor, Ricci tensor and scalar curvature satisfy the following relations
  \begin{align}
    \delta R_{\mu\beta\nu}^\alpha &= \nabla_\beta \delta \Gamma_{\mu\nu}^\alpha - \nabla_\nu \delta \Gamma_{\mu \beta}^\alpha,\label{lem:deltag:4a} \\
    \delta R_{\mu\nu} &= \nabla_\lambda \delta \Gamma_{\mu\nu}^\lambda - \nabla_\nu \delta \Gamma_{\mu\lambda}^\lambda,\label{lem:deltag:4b} \\
    \delta R &= R_{\mu\nu} \delta g^{\mu\nu} - K_{\mu\nu} \delta g^{\mu\nu}, \label{lem:deltag.4}\\
    \delta \nabla_\mu \nabla_\nu \psi &= \nabla_\mu\nabla_\nu \delta \psi - \nabla_\lambda \psi \delta \Gamma_{\mu\nu}^\lambda, \label{lem:deltag.5}
  \end{align}

  where $K_{\mu\nu} = \nabla_\mu \nabla_\nu - g_{\mu\nu}\Box$. \\
\end{lemma}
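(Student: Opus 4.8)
The plan is to prove the four relations sequentially, each one feeding into the next, and to invoke Lemma~\ref{lem:deltag1} only at the final stage.

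To obtain \eqref{lem:deltag:4a} I would start from the coordinate expression of the Riemann tensor,
\begin{equation}
R_{\mu\beta\nu}^\alpha = \partial_\beta \Gamma_{\mu\nu}^\alpha - \partial_\nu \Gamma_{\mu\beta}^\alpha + \Gamma_{\beta\sigma}^\alpha \Gamma_{\mu\nu}^\sigma - \Gamma_{\nu\sigma}^\alpha \Gamma_{\mu\beta}^\sigma,
\end{equation}
and apply $\delta$ term by term. The decisive point is that, although $\Gamma_{\mu\nu}^\lambda$ is not a tensor, the difference of two connections $\delta\Gamma_{\mu\nu}^\lambda$ transforms as a $(1,2)$-tensor; this is precisely what allows the non-covariant partial-derivative pieces to recombine into covariant derivatives. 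Concretely, I would expand $\nabla_\beta \delta\Gamma_{\mu\nu}^\alpha$ in terms of the connection coefficients, do the same with $\beta$ and $\nu$ interchanged, and subtract. Upon antisymmetrizing in $\beta$ and $\nu$ the pieces symmetric in those indices cancel, and the surviving combination reproduces $\delta R_{\mu\beta\nu}^\alpha$, yielding the Palatini identity \eqref{lem:deltag:4a}. Identity \eqref{lem:deltag:4b} then follows at once by contracting the upper index of \eqref{lem:deltag:4a} with $\beta$, since $R_{\mu\nu}=R_{\mu\alpha\nu}^\alpha$; relabelling the dummy index $\alpha\to\lambda$ gives the stated form.

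The scalar curvature identity \eqref{lem:deltag.4} is where the genuine bookkeeping lives, and I expect it to be the main obstacle. Starting from $R=g^{\mu\nu}R_{\mu\nu}$ and the product rule, $\delta R = R_{\mu\nu}\delta g^{\mu\nu} + g^{\mu\nu}\delta R_{\mu\nu}$, the first term is already of the desired shape. For the second term I would substitute \eqref{lem:deltag:4b}, commute $g^{\mu\nu}$ through the covariant derivatives using metric compatibility $\nabla_\lambda g_{\mu\nu}=0$, and then insert the explicit expressions from Lemma~\ref{lem:deltag1}: equation \eqref{lem:deltag.3} for $g^{\mu\nu}\delta\Gamma_{\mu\nu}^\lambda$, and the trace identity $\delta\Gamma_{\mu\lambda}^\lambda=-\tfrac12 g_{\lambda\alpha}\nabla_\mu\delta g^{\lambda\alpha}$ recorded at the end of the previous proof. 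Contracting the metric factors, the term $\nabla_\lambda\bigl(g^{\mu\nu}\delta\Gamma_{\mu\nu}^\lambda\bigr)$ produces $-\nabla_\lambda\nabla_\alpha\delta g^{\lambda\alpha}+\tfrac12 g_{\alpha\beta}\Box\,\delta g^{\alpha\beta}$, while $g^{\mu\nu}\nabla_\nu\delta\Gamma_{\mu\lambda}^\lambda$ yields $-\tfrac12 g_{\lambda\alpha}\Box\,\delta g^{\lambda\alpha}$; their difference collapses to $-\nabla_\mu\nabla_\nu\delta g^{\mu\nu}+g_{\mu\nu}\Box\,\delta g^{\mu\nu}=-K_{\mu\nu}\delta g^{\mu\nu}$, exactly the correction term. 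The only real danger is index management — keeping track of which contractions produce Kronecker deltas and which survive — so I would verify each relabelling.

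Finally, \eqref{lem:deltag.5} follows from a direct computation. Since $\psi$ is a scalar, $\nabla_\nu\psi=\partial_\nu\psi$ and $\nabla_\mu\nabla_\nu\psi=\partial_\mu\partial_\nu\psi-\Gamma_{\mu\nu}^\lambda\partial_\lambda\psi$. Applying $\delta$, the only metric dependence enters through $\Gamma$ and through $\psi$ itself; using $\partial_\lambda\psi=\nabla_\lambda\psi$ and reassembling $\partial_\mu\partial_\nu\delta\psi-\Gamma_{\mu\nu}^\lambda\partial_\lambda\delta\psi=\nabla_\mu\nabla_\nu\delta\psi$ gives \eqref{lem:deltag.5} immediately.
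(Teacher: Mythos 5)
Your proposal is correct and follows essentially the same route as the paper: the Palatini identity for $\delta R_{\mu\beta\nu}^\alpha$ via the tensoriality of $\delta\Gamma_{\mu\nu}^\lambda$, contraction to get $\delta R_{\mu\nu}$, the product rule plus the explicit formulas of Lemma~\ref{lem:deltag1} (including the trace identity $\delta\Gamma_{\mu\lambda}^\lambda=-\tfrac12 g_{\lambda\alpha}\nabla_\mu\delta g^{\lambda\alpha}$) for $\delta R$, and a direct coordinate computation for $\delta\nabla_\mu\nabla_\nu\psi$. Your intermediate contractions for \eqref{lem:deltag.4} — in particular $\nabla_\lambda\bigl(g^{\mu\nu}\delta\Gamma_{\mu\nu}^\lambda\bigr)=-\nabla_\mu\nabla_\nu\delta g^{\mu\nu}+\tfrac12 g_{\mu\nu}\Box\,\delta g^{\mu\nu}$ and $g^{\mu\nu}\nabla_\nu\delta\Gamma_{\mu\lambda}^\lambda=-\tfrac12 g_{\mu\nu}\Box\,\delta g^{\mu\nu}$ — agree with the paper's calculation, so the plan closes with no gaps.
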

\begin{proof}
The variation of Riemann tensor is obtained as follows
\begin{equation}
  \begin{aligned}
    \delta R_{\mu\beta\nu}^\alpha &= \delta \left(\partial_\beta \Gamma_{\mu\nu}^\alpha - \partial_\nu \Gamma_{\mu\beta}^\alpha  + \Gamma_{\mu\nu}^\lambda \Gamma_{\beta\lambda}^\alpha - \Gamma_{\mu\beta}^\lambda \Gamma_{\nu\lambda}^\alpha\right)\\
    &= \partial_\beta \delta\Gamma_{\mu\nu}^\alpha - \partial_\nu \delta \Gamma_{\mu\beta}^\alpha  + \delta \Gamma_{\mu\nu}^\lambda \Gamma_{\beta\lambda}^\alpha + \Gamma_{\mu\nu}^\lambda \delta \Gamma_{\beta\lambda}^\alpha -  \Gamma_{\mu\beta}^\lambda \delta \Gamma_{\nu\lambda}^\alpha \\
    &- \delta \Gamma_{\mu\beta}^\lambda \Gamma_{\nu\lambda}^\alpha - \Gamma_{\beta\nu}^\lambda \delta \Gamma_{\mu\lambda}^\alpha + \Gamma_{\beta\nu}^\lambda \delta \Gamma_{\mu\lambda}^\alpha .\\
  \end{aligned}
\end{equation}

In the last equation first, third and fifth term combined give $\nabla_\beta \delta\Gamma_{\mu\nu}^\alpha$, and second, fourth and sixth term give $-\nabla_\nu \delta\Gamma_{\mu\beta}^\alpha$, so we proved \eqref{lem:deltag:4a}. Equation \eqref{lem:deltag:4b} is obtained from the previous by contracting indices $\alpha$ and $\beta$.

To prove the equation \eqref{lem:deltag.4} we begin  with $R =  g^{\mu\nu} R_{\mu\nu}$. Applying the operator $\delta$ to both sides yields
\begin{equation}\begin{aligned}
  \delta R &= R_{\mu\nu} \delta g^{\mu\nu} + g^{\mu\nu} \delta R_{\mu\nu} \\
           &= R_{\mu\nu} \delta g^{\mu\nu} + g^{\mu\nu} \left( \nabla_\lambda \delta\Gamma_{\mu\nu}^\lambda - \nabla_\nu\delta\Gamma_{\mu\lambda}^\lambda \right) \\
           &= R_{\mu\nu} \delta g^{\mu\nu} -\frac 12 g^{\mu\nu} \left(2 g_{\nu\alpha} \nabla_\lambda \nabla_\mu \delta g^{\lambda \alpha} - g_{\mu\alpha} g_{\nu\beta} \Box \delta g^{\alpha\beta} - g_{\lambda\alpha} \nabla_\nu \nabla_\mu \delta g^{\lambda \alpha}\right) \\
           &= R_{\mu\nu} \delta g^{\mu\nu} -\frac 12 \left(2 \delta_{\alpha}^\mu \nabla_\lambda \nabla_\mu \delta g^{\lambda \alpha} - \delta_{\alpha}^\nu g_{\nu\beta} \Box \delta g^{\alpha\beta} - g_{\lambda\alpha} \Box \delta g^{\lambda \alpha}\right) \\
           &= R_{\mu\nu} \delta g^{\mu\nu} -\frac 12 \left(2 \nabla_\mu \nabla_\nu \delta g^{\mu\nu} -  2g_{\mu\nu} \Box \delta g^{\mu\nu} \right) \\
           &= R_{\mu\nu} \delta g^{\mu\nu} -K_{\mu\nu} \delta g^{\mu\nu}.
\end{aligned} \end{equation}

The last equation \eqref{lem:deltag.5} is proved in the following way
%\begin{equation}
  \begin{align}
        \delta \nabla_\mu \nabla_\nu \psi &= \delta \left(\partial_\mu \nabla_\nu \psi - \Gamma_{\mu\nu}^\lambda\nabla_\lambda \psi\right) \nonumber \\
        &= \delta \left(\partial^2_{\mu\nu} \psi - \Gamma_{\mu\nu}^\lambda\partial_\lambda \psi\right) \nonumber\\
        &= \partial^2_{\mu\nu} \delta \psi - \Gamma_{\mu\nu}^\lambda\partial_\lambda \delta \psi - \partial_\lambda \psi \delta \Gamma_{\mu\nu}^\lambda \\
        &= \nabla_\mu\nabla_\nu \delta \psi -  \nabla_\lambda \psi \; \delta \Gamma_{\mu\nu}^\lambda. \nonumber %&&\hfill\mbox{\qedhere}
  \end{align}
%\end{equation}
\end{proof}
\begin{lemma} \label{lem:hkdeltag}
  Every scalar function $\HH(R)$ satisfies
    \begin{align}
        \int_M \HH g_{\mu\nu} (\Box \delta g^{\mu\nu}) \sqrt{-g} \; \dx &= \int_M g_{\mu\nu} (\Box \HH)  \delta g^{\mu\nu} \sqrt{-g}\; \dx, \label{lem:hkdeltag.1}\\
        \int_M \HH \nabla_\mu \nabla_\nu \delta g^{\mu\nu} \sqrt{-g} \; \dx & = \int_M \nabla_\mu \nabla_\nu \HH \; \delta g^{\mu\nu} \sqrt{-g} \; \dx, \label{lem:hkdeltag.2}\\
        \int_M \HH K_{\mu\nu} \delta g^{\mu\nu} \sqrt{-g} \; \dx & = \int_M K_{\mu\nu} \HH \; \delta g^{\mu\nu} \sqrt{-g} \; \dx. \label{lem:hkdeltag.3}
    \end{align}

\end{lemma}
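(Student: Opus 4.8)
The three identities are all integration-by-parts formulas, and they flow from two standard facts on a pseudo-Riemannian manifold: metric compatibility $\nabla_\lambda g_{\mu\nu}=0$, which lets $g_{\mu\nu}$ slide freely past any covariant derivative, and the divergence theorem in the form $\sqrt{-g}\,\nabla_\mu V^\mu = \partial_\mu(\sqrt{-g}\,V^\mu)$, which guarantees $\int_M \nabla_\mu V^\mu \sqrt{-g}\,\dx = 0$ for any vector field $V^\mu$ (assuming $M$ is without boundary, or that the variations decay suitably so that boundary contributions are discarded). The plan is to establish \eqref{lem:hkdeltag.1} first, obtain \eqref{lem:hkdeltag.2} by an independent double integration by parts, and then read off \eqref{lem:hkdeltag.3} as an immediate linear combination of the two.

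For \eqref{lem:hkdeltag.1} I would first use metric compatibility to write $g_{\mu\nu}\,\Box\,\delta g^{\mu\nu} = \Box\big(g_{\mu\nu}\,\delta g^{\mu\nu}\big)$. Setting $h = g_{\mu\nu}\,\delta g^{\mu\nu}$, the claim reduces to the self-adjointness of $\Box$ on scalars, namely $\int_M \HH\,\Box h\,\sqrt{-g}\,\dx = \int_M (\Box\HH)\,h\,\sqrt{-g}\,\dx$. This in turn follows from the pointwise identity $\HH\,\Box h - (\Box\HH)\,h = \nabla_\mu\big(\HH\,\nabla^\mu h - (\nabla^\mu\HH)\,h\big)$, whose right-hand side is a total divergence and hence integrates to zero; reintroducing $h$ and pulling $g_{\mu\nu}$ back out of the derivative yields the stated form.

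For \eqref{lem:hkdeltag.2} I would integrate by parts twice. The first step uses $\HH\,\nabla_\mu(\nabla_\nu\,\delta g^{\mu\nu}) = \nabla_\mu\big(\HH\,\nabla_\nu\,\delta g^{\mu\nu}\big) - (\nabla_\mu\HH)\,\nabla_\nu\,\delta g^{\mu\nu}$ and discards the divergence; the second step rewrites $(\nabla_\mu\HH)\,\nabla_\nu\,\delta g^{\mu\nu} = \nabla_\nu\big((\nabla_\mu\HH)\,\delta g^{\mu\nu}\big) - (\nabla_\nu\nabla_\mu\HH)\,\delta g^{\mu\nu}$ and again discards the divergence. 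This gives $\int_M \HH\,\nabla_\mu\nabla_\nu\,\delta g^{\mu\nu}\,\sqrt{-g}\,\dx = \int_M (\nabla_\nu\nabla_\mu\HH)\,\delta g^{\mu\nu}\,\sqrt{-g}\,\dx$, and since $\HH$ is a scalar its second covariant derivative is symmetric, $\nabla_\nu\nabla_\mu\HH = \nabla_\mu\nabla_\nu\HH$, producing the claimed identity.

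Finally, \eqref{lem:hkdeltag.3} needs no fresh computation: expanding $K_{\mu\nu} = \nabla_\mu\nabla_\nu - g_{\mu\nu}\Box$ and applying \eqref{lem:hkdeltag.2} to the $\nabla_\mu\nabla_\nu$ piece and \eqref{lem:hkdeltag.1} to the $g_{\mu\nu}\Box$ piece recombines, by linearity, into $\int_M (K_{\mu\nu}\HH)\,\delta g^{\mu\nu}\,\sqrt{-g}\,\dx$. The only genuine subtlety anywhere in the argument is the discarding of boundary terms in the divergence theorem, so I expect that to be the sole point requiring an explicit hypothesis (closed $M$, or compactly supported or decaying variations $\delta g^{\mu\nu}$); the remaining algebra is routine once metric compatibility is used to commute $g_{\mu\nu}$ past $\nabla$.
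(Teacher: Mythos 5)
Your proposal is correct and follows essentially the same route as the paper: both arguments are integration-by-parts computations via Stokes' theorem with boundary contributions discarded (the paper packages your two integrations by parts for \eqref{lem:hkdeltag.2} into the divergence of the single vector field $N^\mu = \HH \nabla_\nu \delta g^{\mu\nu} - \nabla_\nu \HH\, \delta g^{\mu\nu}$, and handles the index symmetry via the symmetry of $\delta g^{\mu\nu}$ rather than of the Hessian of $\HH$, but these are cosmetic differences). Your treatment of \eqref{lem:hkdeltag.3} as a linear combination of the first two identities is exactly the paper's.
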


\begin{proof}
Equation \eqref{lem:hkdeltag.1} is proved by application of Stokes' theorem:
\begin{equation}\begin{aligned}
\int_M \HH g_{\mu\nu} \Box \delta g^{\mu\nu} \sqrt{-g} \; \dx &= \int_M \HH g_{\mu\nu} \nabla_{\alpha}\nabla^{\alpha} \delta g^{\mu\nu} \sqrt{-g} \; \dx \\
&=-\int_M \nabla_{\alpha}(\HH g_{\mu\nu})\nabla^{\alpha}\delta g^{\mu\nu} \sqrt{-g} \; \dx  \\
&= \int_M g_{\mu\nu}\nabla^{\alpha}\nabla_{\alpha}\HH \; \delta g^{\mu\nu} \sqrt{-g}\; \dx \\
&= \int_M g_{\mu\nu} \Box \HH \; \delta g^{\mu\nu}\sqrt{-g} \; \dx.
\end{aligned} \end{equation}

Let $N^\mu = \HH \nabla_\nu \delta g^{\mu\nu} - \nabla_\nu \HH \delta g^{\mu\nu}$, then $ \nmu N^\mu$ can be written as
\begin{equation}\begin{aligned}
\nmu N^{\mu} & = \nmu (\HH \nnu \delta g^{\mu\nu} - \nnu \HH \delta g^{\mu\nu}) \\
 & = \nmu \HH \nnu \delta g^{\mu\nu} + \HH \nmu \nnu \delta g^{\mu\nu} - \nmu \nnu \HH \; \delta g^{\mu\nu} - \nnu \HH \nmu \delta g^{\mu\nu} \\
& = \HH \nmu \nnu \delta g^{\mu\nu} - \nmu \nnu \HH \; \delta
g^{\mu\nu}.
\end{aligned}\end{equation}

Integration over $M$ yields $\int_M \nmu N^{\mu} \sqrt{-g} \; \dx = \int_{\partial M} N^{\mu}n_{\mu} \mathrm d\partial M $, where $n_\mu$ is the unit normal to a hypersurface $\partial M$. As the restriction  $N^\mu|_{\partial M}$  vanish, the last integral vanish as well, which proves \eqref{lem:hkdeltag.2}.

Equation \eqref{lem:hkdeltag.3} is a direct consequence of \eqref{lem:hkdeltag.1} and \eqref{lem:hkdeltag.2}.
\end{proof}
\begin{lemma}\label{lem:j_n}
Let $\HH(R)$ and $\GG(R)$ be scalar functions. Then for all $n \in \Natural$
\begin{align}
\int_M \HH \delta \Box^n \GG \sqrt{-g} \; \dx & =  \frac 12 \sum_{l=0}^{n-1}\int_M S_{\mu\nu}(\Box^l \HH,\Box^{n-1-l} \GG) \delta g^{\mu\nu} \; \sqrt{-g} \; \dx \nonumber \\
&+ \int_M \Box^n \HH \; \delta \GG \sqrt{-g} \; \dx.
\end{align}
\end{lemma}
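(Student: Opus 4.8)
The plan is to argue by induction on $n$, with the whole argument resting on a single ``one-box'' identity that I would establish first. Concretely, I would first prove that for arbitrary scalar functions $A$ and $B$
\begin{equation}
\int_M A\, \delta(\Box B)\, \sqrt{-g}\, \dx = \frac 12 \int_M S_{\mu\nu}(A,B)\, \delta g^{\mu\nu}\, \sqrt{-g}\, \dx + \int_M (\Box A)\, \delta B\, \sqrt{-g}\, \dx, \tag{$\ast$}
\end{equation}
which is precisely the case $n=1$ and which simultaneously fixes the bilinear expression $S_{\mu\nu}$. To obtain $(\ast)$ I write $\Box B = g^{\mu\nu}\nabla_\mu\nabla_\nu B$ and vary, using $\delta(g^{\mu\nu}\nabla_\mu\nabla_\nu B) = \delta g^{\mu\nu}\,\nabla_\mu\nabla_\nu B + \Box\,\delta B - \nabla_\lambda B\, g^{\mu\nu}\delta\Gamma^\lambda_{\mu\nu}$, where the last two terms come from \eqref{lem:deltag.5}. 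Multiplying by $A$ and integrating, the $\Box\,\delta B$ term yields $\int_M(\Box A)\,\delta B\,\sqrt{-g}\,\dx$ after moving $\Box$ onto $A$ by Stokes' theorem exactly as in the proof of \eqref{lem:hkdeltag.1}, the boundary terms dropping because $\delta g^{\mu\nu}$ and its first derivatives vanish on $\partial M$. The remaining two terms are then brought to the canonical form $(\,\cdot\,)_{\mu\nu}\,\delta g^{\mu\nu}$: I substitute the contraction $g^{\mu\nu}\delta\Gamma^\lambda_{\mu\nu} = -\nabla_\alpha\delta g^{\lambda\alpha} + \tfrac 12 g_{\alpha\beta}\nabla^\lambda\delta g^{\alpha\beta}$ of \eqref{lem:deltag.3} and integrate by parts once more, whereupon the explicit $A\,\nabla_\mu\nabla_\nu B$ contribution cancels the surviving second-derivative piece and leaves a manifestly symmetric first-order bilinear, which I read off as $S_{\mu\nu}(A,B)$.

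With $(\ast)$ in hand the inductive step is bookkeeping. Writing $\Box^n\GG = \Box(\Box^{n-1}\GG)$ and applying $(\ast)$ with $A=\HH$, $B=\Box^{n-1}\GG$ gives
\begin{equation}
\int_M \HH\,\delta\Box^n\GG\,\sqrt{-g}\,\dx = \frac 12\int_M S_{\mu\nu}(\HH,\Box^{n-1}\GG)\,\delta g^{\mu\nu}\,\sqrt{-g}\,\dx + \int_M (\Box\HH)\,\delta(\Box^{n-1}\GG)\,\sqrt{-g}\,\dx.
\end{equation}
The first term is exactly the $l=0$ summand of the claimed formula. To the second term I apply the induction hypothesis, i.e. the statement of the lemma with $n$ replaced by $n-1$ and $\HH$ replaced by $\Box\HH$; using $\Box^l(\Box\HH)=\Box^{l+1}\HH$ and $\Box^{n-1}(\Box\HH)=\Box^n\HH$ and then shifting the summation index by one turns its $S$-sum into $\tfrac 12\sum_{l=1}^{n-1}\int_M S_{\mu\nu}(\Box^l\HH,\Box^{n-1-l}\GG)\,\delta g^{\mu\nu}\,\sqrt{-g}\,\dx$ and its last term into $\int_M\Box^n\HH\,\delta\GG\,\sqrt{-g}\,\dx$. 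Merging the $l=0$ summand with the terms $l\ge 1$ produces the full sum $\sum_{l=0}^{n-1}$ and closes the induction; the base case $n=1$ is $(\ast)$ itself, the empty $n=0$ case being trivial.

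The main obstacle is contained entirely in the base identity $(\ast)$: correctly transporting the variation through the twice-covariant derivative via \eqref{lem:deltag.5}, and then arranging the integration by parts so that the dangerous second-order term $A\,\nabla_\mu\nabla_\nu B$ is absorbed and only a symmetric bilinear survives. One must check that every integration by parts is legitimate, i.e. that all boundary contributions vanish, which is guaranteed precisely as in Lemma \ref{lem:hkdeltag}. Once $(\ast)$ is correct the inductive step is purely formal, the only delicate point being the index shift together with the identities $\Box^l\circ\Box=\Box^{l+1}$ that align the two $S$-sums.
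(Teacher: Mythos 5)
Your proposal is correct and takes essentially the same route as the paper: your base identity $(\ast)$ is exactly the paper's single reduction step, obtained the same way (varying $g^{\mu\nu}\nabla_\mu\nabla_\nu$ via \eqref{lem:deltag.5}, substituting the contraction of \eqref{lem:deltag.3}, and integrating by parts so the $A\,\nabla_\mu\nabla_\nu B$ term cancels and $S_{\mu\nu}$ plus $\int_M \Box A\,\delta B$ remain). Your explicit induction on $n$, with the index shift after replacing $\HH$ by $\Box\HH$, is just a formalization of the paper's ``after $n-1$ more steps'' iteration of that same reduction.
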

\begin{proof}
The definition of the $\Box$ operator implies
\begin{equation}\begin{aligned}
I=&\int_M \HH \delta \Box^n \GG \sqrt{-g} \; \dx = \int_M \HH \delta (g^{\mu\nu}\nabla_\mu\nabla_\nu  \Box^{n-1}\GG) \sqrt{-g} \; \dx \\
=& \int_M \HH \left(\nabla_\mu\nabla_\nu\; \Box^{n-1}\GG \delta g^{\mu\nu} + g^{\mu\nu} \delta \nabla_\mu\nabla_\nu  \Box^{n-1}\GG\right) \sqrt{-g} \; \dx \\
=& \int_M \HH \left(\nabla_\mu\nabla_\nu\; \Box^{n-1}\GG \delta g^{\mu\nu} + \Box  \delta \Box^{n-1}\GG - \nabla_\lambda \Box^{n-1}\GG g^{\mu\nu} \delta \Gamma_{\mu\nu}^\lambda \right) \sqrt{-g} \; \dx \label{eq:1}.
\end{aligned} \end{equation}

On the other hand, Lemma 1 %\ref{lem:hkdeltag} 
yields
\begin{equation}\begin{aligned}
  g^{\mu\nu} \delta \Gamma_{\mu\nu}^\lambda &= -\frac 12 g^{\mu\nu}\left( g_{\nu\alpha} \nabla_\mu \delta g^{\lambda \alpha} + g_{\mu\alpha} \nabla_\nu \delta g^{\lambda \alpha} - g_{\mu\alpha} g_{\nu\beta} \nabla^\lambda \delta g^{\alpha\beta}\right) \\
  &= -\frac 12 \left( \delta_\alpha^\mu \nabla_\mu \delta g^{\lambda \alpha} + \delta_\alpha^\nu \nabla_\nu \delta g^{\lambda \alpha} - \delta_\alpha^\nu g_{\nu\beta} \nabla^\lambda \delta g^{\alpha\beta}\right) \\
  &= - \frac 12 (2\nabla_\mu \delta g^{\lambda \mu} - g_{\mu\nu} \nabla^\lambda \delta g^{\mu\nu}).
\end{aligned}\end{equation}

Moreover, from the equation \eqref{eq:1} and Stokes' theorem we get
\begin{equation}\begin{aligned}
I=&\int_M \HH \Big(\nabla_\mu\nabla_\nu \Box^{n-1}\GG \delta g^{\mu\nu} + \Box  \delta \Box^{n-1}\GG \\
+&\frac 12 \nabla_\lambda \Box^{n-1}\GG (
2\nabla_\mu \delta g^{\lambda \mu} - g_{\mu\nu} \nabla^\lambda \delta g^{\mu\nu}) \Big) \sqrt{-g} \; \dx \\
=& \int_M \HH \nabla_\mu\nabla_\nu \Box^{n-1}\GG \delta g^{\mu\nu} \sqrt{-g} \; \dx + \int_M \HH\;\Box \delta \Box^{n-1}\GG \sqrt{-g} \; \dx \\
-& \int_M \nabla_\mu(\HH \nabla_\lambda \Box^{n-1}\GG )\delta g^{\lambda\mu} \sqrt{-g} \; \dx \\
-& \frac 12\int_M g_{\mu\nu}\nabla^{\lambda}(\HH \nabla_\lambda \Box^{n-1}\GG) \delta g^{\mu\nu} \sqrt{-g} \; \dx \\
=&  \int_M \HH\; \Box \delta \Box^{n-1}\GG \sqrt{-g} \; \dx - \int_M \nabla_\mu\HH \nabla_\nu \Box^{n-1}\GG \delta g^{\mu\nu} \sqrt{-g} \; \dx \\
-& \frac 12\int_M g_{\mu\nu}(\nabla^{\lambda} \HH \nabla_\lambda \Box^{n-1}\GG + \HH \Box \Box^{n-1}\GG) \delta g^{\mu\nu} \sqrt{-g} \; \dx \\
=&  \int_M \HH\; \Box\delta \Box^{n-1}\GG \sqrt{-g} \; \dx + \frac 12 \int_M S_{\mu\nu}(\HH,\Box^{n-1}\GG) \delta g^{\mu\nu} \sqrt{-g} \; \dx.
\end{aligned} \end{equation}

Partial integration in the first term of the previous formula yields
\begin{equation}\begin{aligned}
I=&  \int_M \Box \HH\; \delta \Box^{n-1}\GG \sqrt{-g} \; \dx + \frac 12 \int_M S_{\mu\nu}(\HH,\Box^{n-1}\GG) \delta g^{\mu\nu} \sqrt{-g} \; \dx.
\end{aligned} \end{equation}
%Istim postupkom polazec1i od prvog chlana u poslednjoj formuli dobijamo
%\begin{equation}\begin{aligned}
%I=&  \int_M \Big(\Box^2 \HH\; \delta \Box^{n-2}\GG  + \frac 12\sum_{l=0}^1 S_{\mu\nu}(\Box^l \HH,\Box^{n-1-l}\GG) \delta g^{\mu\nu}\Big) \sqrt{-g} \; \dx,
%\end{aligned} \end{equation}
after $n-1$ more steps
\begin{equation}\begin{aligned}
I=&  \int_M \Big(\Box^n \HH\; \delta \GG  + \frac 12\sum_{l=0}^{n-1} S_{\mu\nu}(\Box^l \HH,\Box^{n-1-l}\GG) \delta g^{\mu\nu}\Big) \sqrt{-g} \; \dx.
\end{aligned} \end{equation}
\end{proof}

\begin{theorem} \label{thm:var}
  Let $\HH$ and $\GG$ be scalar functions of scalar curvature, then
  \begin{align}
    \int_M \HH \delta ( \sqrt{-g}) \; \dx &= -\frac 12 \int_M g_{\mu\nu} \HH \delta g^{\mu\nu} \sqrt{-g} \; \dx, \label{thm:eom.1}\\
    \int_M \HH \delta R \sqrt{-g} \; \dx &= \int_M \left( R_{\mu\nu} \HH - K_{\mu\nu} \HH\right) \delta g^{\mu\nu} \sqrt{-g} \; \dx, \label{thm:eom.2}\\
    \int_M \HH \delta (\FF(\Box)\GG) \sqrt{-g} \; \dx &= \int_M \left( R_{\mu\nu} - K_{\mu\nu}\right) \left(\GG' \FF(\Box) \HH \right)
    \delta g^{\mu\nu} \sqrt{-g} \; \dx \nonumber \\
    &+ \frac 12 \sum_{n=1}^\infty f_n \sum_{l=0}^{n-1}\int_M S_{\mu\nu} (\Box^l \HH, \Box^{n-1-l}\GG) \delta g^{\mu\nu} \sqrt{-g} \; \dx, \label{thm:eom.3}
\end{align}
where $S_{\mu\nu}(A,B) = g_{\mu\nu} \nabla^\alpha A \nabla_\alpha B + g_{\mu\nu} A \Box B -2 \nabla_\mu A \nabla_\nu B$.
\end{theorem}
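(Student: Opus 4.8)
The plan is to treat the three identities in increasing order of difficulty, since the third relies on the second. The first identity \eqref{thm:eom.1} is immediate: I multiply the variation \eqref{lem:deltag.2} of Lemma \ref{lem:deltag1}, namely $\delta\sqrt{-g} = -\frac12 g_{\mu\nu}\sqrt{-g}\,\delta g^{\mu\nu}$, by the scalar $\HH$ and integrate over $M$.

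For the second identity \eqref{thm:eom.2} I would start from $\delta R = (R_{\mu\nu} - K_{\mu\nu})\,\delta g^{\mu\nu}$, which is \eqref{lem:deltag.4}. Multiplying by $\HH$ and integrating splits the result into $\int_M R_{\mu\nu}\HH\,\delta g^{\mu\nu}\sqrt{-g}\,\dx$, already in the desired form, and $-\int_M \HH K_{\mu\nu}\delta g^{\mu\nu}\sqrt{-g}\,\dx$. The only work is to move the second-order operator $K_{\mu\nu}$ off $\delta g^{\mu\nu}$ and onto $\HH$; this is exactly the self-adjointness relation \eqref{lem:hkdeltag.3} of Lemma \ref{lem:hkdeltag}, which yields $-\int_M (K_{\mu\nu}\HH)\,\delta g^{\mu\nu}\sqrt{-g}\,\dx$ and completes the identity.

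The third identity \eqref{thm:eom.3} is the substantial one. I would expand $\FF(\Box)\GG = \sum_{n=0}^{\infty} f_n \Box^n \GG$ and use linearity of $\delta$ to write $\int_M \HH\,\delta(\FF(\Box)\GG)\sqrt{-g}\,\dx = \sum_{n=0}^\infty f_n \int_M \HH\,\delta\Box^n\GG\,\sqrt{-g}\,\dx$. The $n=0$ term is simply $f_0\int_M\HH\,\delta\GG\,\sqrt{-g}\,\dx$, and for each $n\ge 1$ I apply Lemma \ref{lem:j_n}. Summing, all the $\delta\GG$ contributions collect (using the convention $\Box^0\HH=\HH$) into $\int_M\big(\sum_{n=0}^\infty f_n\Box^n\HH\big)\delta\GG\,\sqrt{-g}\,\dx = \int_M\FF(\Box)\HH\,\delta\GG\,\sqrt{-g}\,\dx$, while the $S_{\mu\nu}$ contributions assemble directly into the announced double sum $\frac12\sum_{n=1}^\infty f_n\sum_{l=0}^{n-1}\int_M S_{\mu\nu}(\Box^l\HH,\Box^{n-1-l}\GG)\,\delta g^{\mu\nu}\sqrt{-g}\,\dx$.

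The key manoeuvre that produces the compact first term is then to use $\delta\GG = \GG'\,\delta R$ (chain rule, since $\GG$ is a function of $R$) and to apply the already-established identity \eqref{thm:eom.2} not to $\HH$ but to the composite scalar function $\GG'\FF(\Box)\HH$. This converts $\int_M (\FF(\Box)\HH)\,\GG'\,\delta R\,\sqrt{-g}\,\dx$ into $\int_M (R_{\mu\nu} - K_{\mu\nu})\big(\GG'\FF(\Box)\HH\big)\,\delta g^{\mu\nu}\sqrt{-g}\,\dx$, matching the theorem exactly. I expect the main obstacle to be bookkeeping rather than conceptual: correctly separating the $n=0$ term from the range $n\ge1$ where Lemma \ref{lem:j_n} applies, justifying the interchange of the infinite series with integration and variation (which rests on the analyticity and convergence assumed for $\FF$ in the model), and recognizing that $\GG'\FF(\Box)\HH$ must be treated as a single scalar argument when invoking \eqref{thm:eom.2}.
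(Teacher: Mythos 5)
Your proposal is correct and follows essentially the same route as the paper: \eqref{thm:eom.1} from \eqref{lem:deltag.2}, \eqref{thm:eom.2} from \eqref{lem:deltag.4} plus the self-adjointness relation \eqref{lem:hkdeltag.3}, and \eqref{thm:eom.3} by expanding $\FF(\Box)$, invoking Lemma \ref{lem:j_n} for $n\ge 1$, and applying \eqref{thm:eom.2} to the composite scalar $\GG'\FF(\Box)\HH$ via the chain rule $\delta\GG=\GG'\delta R$. The only (immaterial) difference is that the paper applies \eqref{thm:eom.2} to each term $\GG'\Box^n\HH$ before summing over $n$, whereas you collect the series into $\FF(\Box)\HH$ first and apply \eqref{thm:eom.2} once.
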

\begin{proof}
Equation \eqref{thm:eom.1} is a consequence of \eqref{lem:deltag.2}. \\
From Lemmas \ref{lem:deltag} and  \ref{lem:hkdeltag} we get
\begin{equation}\begin{aligned}
\int_M \HH \delta R \sqrt{-g} \; \dx &= \int_M \left( R_{\mu\nu}\HH \delta g^{\mu\nu} - \HH K_{\mu\nu} \delta g^{\mu\nu}\right) \sqrt{-g} \; \dx \\
&= \int_M \left( R_{\mu\nu} \HH - K_{\mu\nu} \HH  \right)\delta g^{\mu\nu} \sqrt{-g} \; \dx.
\end{aligned}\end{equation}
To prove \eqref{thm:eom.3} let us introduce the following notation
\begin{align}
J_n = \int_M \HH \delta(\Box^n \GG) \sqrt{-g} \; \dx.
\end{align}
Then,
\begin{align}
  \int_M \HH \delta (\FF(\Box)\GG) \sqrt{-g} \; \dx &= \sum_{n=0}^\infty f_n J_n.
\end{align}
The integral $J_0$ is calculated by applying \eqref{thm:eom.2}, i.e.
\begin{align}
  J_0 &= \int_M \left(R_{\mu\nu} \GG' \HH -K_{\mu\nu} \GG'\HH\right)\delta g^{\mu\nu} \sqrt{-g} \; \dx.
\end{align}
For $n>0$ integral $J_n$ is calculated by applying Lemma \ref{lem:j_n},
\begin{align}
J_n &= \int_M \Box^n \HH \delta\GG \sqrt{-g} \; \dx + \frac 12 \sum_{l=0}^{n-1}\int_M S_{\mu\nu}(\Box^l \HH,\Box^{n-1-l}\GG) \delta g^{\mu\nu} \; \sqrt{-g} \; \dx.
\end{align}

Using \eqref{thm:eom.2} in the first term we obtain
\begin{align}
J_n &= \int_M \left(R_{\mu\nu}\GG'\Box^n \HH -K_{\mu\nu}(\GG'\Box^n \HH)\right) \delta g^{\mu\nu} \sqrt{-g} \; \dx \nonumber \\
&+ \frac 12 \sum_{l=0}^{n-1}\int_M S_{\mu\nu}(\Box^l \HH,\Box^{n-1-l}\GG) \delta g^{\mu\nu} \; \sqrt{-g} \;
\dx.
\end{align}
Summation over $n$ yields the final result
\begin{align}
I&=  \int_M \HH \delta (\FF(\Box)\GG) \sqrt{-g} \; \dx = \sum_{n=0}^\infty f_n J_n \nonumber\\
&= \sum_{n=0}^\infty f_n\int_M \left(R_{\mu\nu}\GG'\Box^n \HH -K_{\mu\nu}(\GG'\Box^n \HH)\right) \delta g^{\mu\nu} \sqrt{-g} \; \dx \nonumber \\
&+ \frac 12 \sum_{n=1}^\infty\sum_{l=0}^{n-1}f_n\int_M S_{\mu\nu}(\Box^l \HH,\Box^{n-1-l}\GG) \delta g^{\mu\nu} \; \sqrt{-g} \;
\dx \\
  &= \int_M \left(R_{\mu\nu}\GG'\FF(\Box) \HH -K_{\mu\nu}(\GG'\FF(\Box) \HH)\right) \delta g^{\mu\nu} \sqrt{-g} \; \dx \nonumber\\
&+ \frac 12 \sum_{n=1}^\infty\sum_{l=0}^{n-1}f_n\int_M S_{\mu\nu}(\Box^l \HH,\Box^{n-1-l}\GG) \delta g^{\mu\nu} \; \sqrt{-g} \;
\dx.\nonumber  %&& \mbox{\qedhere}
\end{align}
\end{proof}

\section{Equations of motion}
%\textcolor{red}{Prelaz na $FLRW$ metriku i signaturu $(1,3)$....}
Let us return to  action \eqref{action}. In order to calculate $\delta S$ we introduce the following auxiliary actions
\begin{align}
  S_0 &= \int_M (R-2\Lambda) \; \sqrt{-g} \; \dx, \\
  %\{2pt}
  S_1 &= \int_M \HH(R) \FF(\Box) \GG(R) \; \sqrt{-g} \; \dx.
\end{align}
Action $S_0$ is Einstein-Hilbert action and its variation is
\begin{align}\label{eq:vareh}
  \delta S_0 &= \int_M \left(G_{\mu\nu} + \Lambda g_{\mu\nu}\right)\delta g^{\mu\nu}\; \sqrt{-g} \; \dx.
\end{align}

\begin{lemma} \label{lem:vars1}
Variation of the action $S_1$ is
\begin{align}
\delta S_1 &= -\frac{1}{2} \int_M g_{\mu\nu} \HH(R)\FF(\Box)\GG(R) \delta g^{\mu\nu}\; \sqrt{-g} \; \dx \nonumber\\
&+ \int_M \left(R_{\mu\nu} W - K_{\mu\nu} W \right) \delta g^{\mu\nu}\; \sqrt{-g} \; \dx \nonumber\\
&+ \frac{1}{2} \sum_{n=1}^{\infty} f_n \sum_{l=0}^{n-1} \int_M S_{\mu\nu}(\Box^l \HH(R), \Box^{n-1-l}\GG(R)) \delta g^{\mu\nu} \sqrt{-g} \; \dx,
\end{align}
where $W = \HH'(R) \FF(\Box)\GG(R) + \GG'(R)\FF(\Box)\HH(R)$. \phantom{A}
\end{lemma}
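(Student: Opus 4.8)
The plan is to expand $\delta S_1$ by the Leibniz rule, splitting the integrand $\HH(R)\,\FF(\Box)\GG(R)\,\sqrt{-g}$ into three pieces according to where the variation lands — on the volume factor $\sqrt{-g}$, on the outer curvature factor $\HH(R)$, or on the composite $\FF(\Box)\GG(R)$ — and then to match each piece against one of the three identities already established in Theorem \ref{thm:var}. Concretely, I would begin from
\begin{align}
\delta S_1 &= \int_M \HH\,\FF(\Box)\GG\;\delta(\sqrt{-g})\;\dx
+ \int_M (\delta \HH)\,\FF(\Box)\GG\;\sqrt{-g}\;\dx \nonumber \\
&\quad + \int_M \HH\;\delta(\FF(\Box)\GG)\;\sqrt{-g}\;\dx. \nonumber
\end{align}
Since $\FF(\Box)$ is linear, the three summands are genuinely independent contributions and no cross terms appear.

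Next I would dispatch the three terms one at a time. The first term is handled by \eqref{thm:eom.1}, applied with the single scalar function taken to be the product $\HH\,\FF(\Box)\GG$; this immediately produces the line $-\tfrac12\int_M g_{\mu\nu}\,\HH\,\FF(\Box)\GG\,\delta g^{\mu\nu}\sqrt{-g}\,\dx$. For the second term, the key observation is that $\HH(R)$ depends on the metric only through the scalar curvature, so $\delta\HH = \HH'(R)\,\delta R$; the term therefore becomes $\int_M (\HH'\FF(\Box)\GG)\,\delta R\,\sqrt{-g}\,\dx$, and applying \eqref{thm:eom.2} with the generic scalar there set equal to $\HH'\FF(\Box)\GG$ gives $\int_M (R_{\mu\nu}-K_{\mu\nu})(\HH'\FF(\Box)\GG)\,\delta g^{\mu\nu}\sqrt{-g}\,\dx$. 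The third term is already exactly the left-hand side of \eqref{thm:eom.3}, so I would simply quote that identity, obtaining $\int_M (R_{\mu\nu}-K_{\mu\nu})(\GG'\FF(\Box)\HH)\,\delta g^{\mu\nu}\sqrt{-g}\,\dx$ together with the double-sum $S_{\mu\nu}$ contribution.

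Finally I would collect like terms. The two pieces carrying the operator $R_{\mu\nu}-K_{\mu\nu}$ have scalar coefficients $\HH'\FF(\Box)\GG$ and $\GG'\FF(\Box)\HH$, which sum precisely to $W = \HH'(R)\FF(\Box)\GG(R)+\GG'(R)\FF(\Box)\HH(R)$, yielding the middle line of the claim; the $\sqrt{-g}$-variation gives the first line and the $S_{\mu\nu}$ double sum from \eqref{thm:eom.3} gives the last line. I expect the main obstacle to be purely bookkeeping rather than analytic: one must keep straight the notational clash in which the symbol $\HH$ denotes both the fixed function appearing in $S_1$ and the arbitrary scalar function in the statement of Theorem \ref{thm:var}, and one must correctly identify which composite object ($\HH'\FF(\Box)\GG$ in one case, the pair $\HH,\GG$ in the other) plays the role of that generic scalar in each invocation. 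No fresh integration by parts is needed, since every such manipulation has already been absorbed into Theorem \ref{thm:var} and Lemma \ref{lem:j_n}.
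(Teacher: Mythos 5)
Your proposal is correct and follows essentially the same route as the paper: the same three-way Leibniz split of $\delta S_1$ into the $\delta(\sqrt{-g})$, $\delta\HH$, and $\delta(\FF(\Box)\GG)$ pieces, the same invocations of \eqref{thm:eom.1}, \eqref{thm:eom.2} (with $\delta\HH=\HH'\,\delta R$ and the generic scalar taken to be $\HH'\FF(\Box)\GG$), and \eqref{thm:eom.3}, followed by the same recombination into $W$. Nothing further is needed; your remark that all integration by parts is already absorbed into Theorem \ref{thm:var} matches the paper's one-line justification.
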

\begin{proof}
Variation $\delta S_1$ is equal to
\begin{align}
\delta S_1 &= \int_M \Big(\HH(R) \FF(\Box)\GG(R) \delta (\sqrt{-g}) + \delta \HH(R) \FF(\Box)\GG(R) \;\sqrt{-g} \nonumber \\
 &+  \HH(R) \delta(\FF(\Box)\GG(R)) \;\sqrt{-g}\Big) \dx.
\end{align}

All the terms in the previous formula are obtained by Theorem \ref{thm:var}. In particular \eqref{thm:eom.1} yields
\begin{align}\label{eq:2}
\int_M \HH(R) \FF(\Box)\GG(R) \delta (\sqrt{-g}) \; \dx &= -\frac 12 \int_M g_{\mu\nu} \HH(R)\FF(\Box)\GG(R) \; \delta g^{\mu\nu} \sqrt{-g} \; \dx.
\end{align}
Also, from equation \eqref{thm:eom.2} we get
\begin{align}
&\int_M \delta (\HH(R)) \FF(\Box)\GG(R) \; \sqrt{-g} \; \dx = \int_M \HH'(R) \delta R \; \FF(\Box)\GG(R) \;\sqrt{-g} \; \dx \nonumber \\
&=  \int_M  \Big( R_{\mu\nu}  \HH'(R)  \FF(\Box)\GG(R) -K_{\mu\nu} \left(\HH'(R) \FF(\Box)\GG(R) \right)\Big) \; \delta g^{\mu\nu} \sqrt{-g} \; \dx. \label{eq:3}
\end{align}
The last term is calculated by \eqref{thm:eom.3}.
\begin{align}
\int_M& \HH(R) \delta(\FF(\Box)\GG(R)) \;\sqrt{-g} \dx  \nonumber \\
&= \int_M \Big(R_{\mu\nu}  \GG'(R)  \FF(\Box)\HH(R) -K_{\mu\nu} \left(\GG'(R) \FF(\Box)\HH(R) \right) \Big) \; \delta g^{\mu\nu} \sqrt{-g} \; \dx \nonumber\\
&+ \sum_{n=1}^\infty f_n \sum_{l=0}^{n-1}\int_M S_{\mu\nu}\left(\Box^l \HH(R), \Box^{n-1-l}\GG(R) \right) \; \delta g^{\mu\nu} \sqrt{-g} \; \dx.\label{eq:4}
\end{align}
 Adding equations \eqref{eq:2}, \eqref{eq:3} and \eqref{eq:4} together proves the Lemma.
 \end{proof}
 \begin{theorem} \label{thm:eom}
Variation of the action \eqref{action} is equal to zero iff
\begin{align} \label{eom}
\hat G_{\mu\nu} &= G_{\mu\nu} +\Lambda g_{\mu\nu} -\frac{1}{2} g_{\mu\nu} \HH(R)\FF(\Box)\GG(R) + \left(R_{\mu\nu} W - K_{\mu\nu} W \right) +\frac 12 \Omega_{\mu\nu} = 0,
\end{align}
where
\begin{align}
  W &= \HH'(R) \FF(\Box)\GG(R) + \GG'(R) \FF(\Box)\HH(R), \\
  \Omega_{\mu\nu} &= \sum_{n=1}^{\infty} f_n \sum_{l=0}^{n-1} S_{\mu\nu}\left( \Box^l \HH(R), \Box^{n-1-l}\GG(R)\right).
\end{align}

\end{theorem}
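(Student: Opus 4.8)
The plan is to reduce the computation entirely to the variations already obtained and then invoke the fundamental lemma of the calculus of variations. First I would split the action as $S = \frac{1}{16\pi G}(S_0 + S_1)$, with $S_0 = \int_M (R-2\Lambda)\sqrt{-g}\,\dx$ the Einstein--Hilbert part and $S_1 = \int_M \HH(R)\FF(\Box)\GG(R)\sqrt{-g}\,\dx$ the nonlocal part, so that by linearity $\delta S = \frac{1}{16\pi G}(\delta S_0 + \delta S_1)$.

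Next I would simply substitute the two ingredients that are already in hand. Equation \eqref{eq:vareh} supplies
\[
\delta S_0 = \int_M \left(G_{\mu\nu} + \Lambda g_{\mu\nu}\right)\delta g^{\mu\nu}\sqrt{-g}\,\dx,
\]
while Lemma \ref{lem:vars1} expresses $\delta S_1$ as a single integral against $\delta g^{\mu\nu}$, with three contributions: the trace term $-\frac12 g_{\mu\nu}\HH(R)\FF(\Box)\GG(R)$, the term $R_{\mu\nu}W - K_{\mu\nu}W$, and the double sum $\frac12\sum_{n\ge1} f_n \sum_{l=0}^{n-1} S_{\mu\nu}(\Box^l \HH(R),\Box^{n-1-l}\GG(R))$.

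Then I would add the two variations and collect the coefficient of $\delta g^{\mu\nu}$ under a single integral sign. Recognizing the double sum as $\tfrac12\Omega_{\mu\nu}$ and keeping $W$ as defined in the statement, the integrand is precisely $\hat G_{\mu\nu}$, so that
\[
\delta S = \frac{1}{16\pi G}\int_M \hat G_{\mu\nu}\,\delta g^{\mu\nu}\sqrt{-g}\,\dx.
\]
Finally I would invoke the fundamental lemma of the calculus of variations: since $\delta g^{\mu\nu}$ ranges over arbitrary smooth variations (the boundary contributions having already been discarded in Lemmas \ref{lem:hkdeltag} and \ref{lem:j_n}), the integral vanishes for all such variations if and only if the integrand vanishes pointwise, i.e.\ $\hat G_{\mu\nu}=0$.

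I do not expect a genuine obstacle at this stage, since the real labor has been absorbed into the preceding lemmas and Theorem \ref{thm:var}; the proof is essentially an assembly step. The one point deserving a word of care is the passage from the vanishing of the integral to the vanishing of $\hat G_{\mu\nu}$: because $\delta g^{\mu\nu}$ is symmetric, the variational argument a priori forces only the symmetric part of the integrand to vanish. This loses no information here, because $\hat G_{\mu\nu}$ is manifestly symmetric, each of its constituents $G_{\mu\nu}$, $g_{\mu\nu}$, $R_{\mu\nu}$, $K_{\mu\nu}$ and $S_{\mu\nu}$ being symmetric in $\mu\nu$.
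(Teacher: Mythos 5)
Your proposal is correct and follows essentially the same route as the paper: the paper's proof simply declares the result ``evident'' from the variation \eqref{eq:vareh} of the Einstein--Hilbert part together with Lemma \ref{lem:vars1} for $\delta S_1$, which is exactly the assembly you carry out explicitly before applying the fundamental lemma of the calculus of variations. Your closing remark on symmetry --- that arbitrary symmetric variations $\delta g^{\mu\nu}$ only force the symmetric part of the integrand to vanish, which suffices because $\hat G_{\mu\nu}$ is manifestly symmetric --- is a point the paper leaves implicit.
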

\begin{proof}
The proof of  Theorem 2 is evident from the Lemma \ref{lem:vars1} and Theorem \ref{thm:var}.
\end{proof}

\section{Second variation of the action}

In this section we set $h_{\mu\nu} = \delta g_{\mu\nu}$. From Lemma \ref{lem:deltag} we see that $h^{\mu\nu} = - \delta g^{\mu\nu}$. Also let $h= g^{\mu\nu}h_{\mu\nu}$ be the trace of $h_{\mu\nu}$.

Operator $\delta \Box$ is defined by $(\delta \Box)V = \delta(\Box V) - \Box \delta V$. Then we can prove the following Lemma
\begin{lemma}
  Let $U,V$ be scalar functions. Then
  \begin{align}
    (\delta \Box)V &= -h^{\mu\nu} \nmu \nnu V - \nabla^\mu h_\mu^\lambda \nabla_\lambda V + \frac 12 \nabla^\lambda h \nabla_\lambda V, \\
    \int_M U (\delta \Box)V \sqrt{-g} \; \dx &= \frac 12 \int_M S_{\mu\nu}(U ,V) \delta g^{\mu\nu} \sqrt{-g} \; \dx.
  \end{align}
\end{lemma}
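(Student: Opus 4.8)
The plan is to establish the pointwise identity for $(\delta\Box)V$ first, and then substitute it into an integration-by-parts argument to obtain the integral identity. For the pointwise formula I would start from the definition $\Box V = g^{\mu\nu}\nmu\nnu V$ and apply $\delta$ by the Leibniz rule, giving $\delta(\Box V) = \delta g^{\mu\nu}\,\nmu\nnu V + g^{\mu\nu}\,\delta(\nmu\nnu V)$. For the second piece I would invoke \eqref{lem:deltag.5}, which yields $g^{\mu\nu}\delta(\nmu\nnu V) = \Box\,\delta V - \nabla_\lambda V\, g^{\mu\nu}\delta\Gamma_{\mu\nu}^\lambda$. Subtracting $\Box\,\delta V$ then leaves $(\delta\Box)V = \delta g^{\mu\nu}\,\nmu\nnu V - \nabla_\lambda V\, g^{\mu\nu}\delta\Gamma_{\mu\nu}^\lambda$. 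It remains to insert $\delta g^{\mu\nu} = -h^{\mu\nu}$ and the already-computed contraction $g^{\mu\nu}\delta\Gamma_{\mu\nu}^\lambda = -\tfrac12\left(2\nabla_\mu\delta g^{\lambda\mu} - g_{\mu\nu}\nabla^\lambda\delta g^{\mu\nu}\right)$ from the proof of Lemma \ref{lem:j_n}. Re-expressing this through $h^{\mu\nu}$ and $h = g_{\mu\nu}h^{\mu\nu}$, and using $\nabla_\mu h^{\lambda\mu} = \nabla^\mu h_\mu^\lambda$, produces exactly the three stated terms.

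For the integral identity I would multiply the pointwise expression by $U$ and integrate over $M$. The first term $-\int_M U\,h^{\mu\nu}\nmu\nnu V\,\sqrt{-g}\,\dx$ I would integrate by parts once, with the boundary term vanishing by Stokes' theorem as in Lemma \ref{lem:hkdeltag}, splitting it as $\int_M h^{\mu\nu}\nabla_\mu U\,\nabla_\nu V\,\sqrt{-g}\,\dx + \int_M U\,\nabla_\mu h^{\mu\nu}\,\nabla_\nu V\,\sqrt{-g}\,\dx$. The crucial observation is that the second of these cancels exactly against the middle term $-\int_M U\,\nabla^\mu h_\mu^\lambda\,\nabla_\lambda V\,\sqrt{-g}\,\dx$ coming from the pointwise identity, since $\nabla_\mu h^{\mu\nu} = \nabla^\mu h_\mu^\nu$. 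For the remaining term $\tfrac12\int_M U\,\nabla^\lambda h\,\nabla_\lambda V\,\sqrt{-g}\,\dx$ I would integrate by parts to move the derivative off $h$, obtaining $-\tfrac12\int_M h\left(\nabla^\lambda U\,\nabla_\lambda V + U\,\Box V\right)\sqrt{-g}\,\dx$. Collecting the surviving contributions leaves the integrand $h^{\mu\nu}\nabla_\mu U\,\nabla_\nu V - \tfrac12 h\,\nabla^\lambda U\,\nabla_\lambda V - \tfrac12 h\,U\,\Box V$.

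Finally I would verify that this coincides with the right-hand side. Expanding $\tfrac12 S_{\mu\nu}(U,V)\,\delta g^{\mu\nu}$ using $\delta g^{\mu\nu} = -h^{\mu\nu}$ and $g_{\mu\nu}h^{\mu\nu} = h$ gives precisely the same three terms, completing the argument. The computation is largely bookkeeping, and I expect the main obstacle to be the cancellation of the $\nabla_\mu h^{\mu\nu}$ contributions between the two divergence terms: this is what eliminates the derivative of $h^{\mu\nu}$ and reduces the result to an expression built only from first derivatives of $U$ and $V$, which is exactly the structure of $S_{\mu\nu}$. Keeping the sign convention $\delta g^{\mu\nu} = -h^{\mu\nu}$ consistent throughout is the other point where care is required.
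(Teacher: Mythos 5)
Your proposal is correct and takes essentially the same approach as the paper: the pointwise formula comes from the Leibniz rule, equation \eqref{lem:deltag.5}, and the contracted $g^{\mu\nu}\delta\Gamma^{\lambda}_{\mu\nu}$, and the integral identity follows by integration by parts with vanishing boundary terms. The only deviation is bookkeeping: the paper moves the derivatives off the $h$-factors in the last two terms so the integrand appears directly as $h^{\mu\nu}$ times an expression, whereas you integrate the Hessian term by parts and cancel the resulting $\nabla_\mu h^{\mu\nu}$ contribution against the middle term; both collapse to the same expansion of $\tfrac 12 S_{\mu\nu}(U,V)\,\delta g^{\mu\nu}$.
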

\begin{proof}
  For the first part, start with
  \begin{align}
    (\delta \Box)V &= \delta(\Box V) - \Box \delta V \\
    &= \delta(g^{\mu\nu} \nmu\nnu V) - \Box \delta V \\
    &=-h^{\mu\nu} \nmu\nnu V  -  g^{\mu\nu}\delta \Gamma_{\mu\nu}^\lambda \nabla_\lambda V \\
    &=-h^{\mu\nu} \nmu\nnu V  -  \frac 12 g^{\mu\nu}(\nmu h_\nu^\lambda + \nnu h_\mu^\lambda - \nabla^\lambda h_{\mu\nu}) \nabla_\lambda V \\
    &= -h^{\mu\nu} \nmu \nnu V - \nabla^\mu h_\mu^\lambda \nabla_\lambda V + \frac 12 \nabla^\lambda h \nabla_\lambda V.
  \end{align}
  The second part of the Lemma is proved by

  \begin{align}
    &\int_M U (\delta \Box)V \sqrt{-g} \; \dx \\
    &= \int_M U (-h^{\mu\nu} \nmu \nnu V - \nabla^\mu h_\mu^\lambda \nabla_\lambda V + \frac 12 \nabla^\lambda h \nabla_\lambda V) \sqrt{-g} \; \dx \\
    &= -\int_M (U  \nmu \nnu V -\nabla_\mu( U \nabla_\nu V) + \frac 12 g_{\mu\nu} \nabla^\lambda(U\nabla_\lambda V)) h^{\mu\nu} \sqrt{-g} \; \dx \\
    &= \frac 12 \int_M (-2\nabla_\mu U \nabla_\nu V  + g_{\mu\nu} \nabla^\lambda U \nabla_\lambda V + g_{\mu\nu} U\Box V) \delta g^{\mu\nu} \sqrt{-g} \; \dx \\
    &= \frac 12 \int_M S_{\mu\nu}(U,V) \delta g^{\mu\nu} \sqrt{-g} \; \dx.
  \end{align}
\end{proof}
In the next lemma we find the variation of $\FF(\Box)$.

\begin{lemma} \label{lem:deltafbox}
Let $U, V$ be scalar functions. Then,
  \begin{align}
    \int_M U \delta(\FF(\Box))V \sqrt{-g} \; \dx = \sum_{n=1}^{\infty} f_n \sum_{l=0}^{n-1} \int_M S_{\mu\nu}(\Box^l U, \Box^{n-1-l} V) \delta g^{\mu\nu} \sqrt{-g} \; \dx .
  \end{align}
\end{lemma}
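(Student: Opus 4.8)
The plan is to reduce the statement to repeated use of the preceding Lemma (the one giving $\int_M U(\delta\Box)V\sqrt{-g}\,\dx=\tfrac12\int_M S_{\mu\nu}(U,V)\delta g^{\mu\nu}\sqrt{-g}\,\dx$) by expanding $\FF(\Box)$ term by term. Writing $\FF(\Box)=\sum_{n=0}^{\infty}f_n\Box^n$ and observing that $\delta(\Box^0)=\delta(\mathrm{Id})=0$, it suffices to vary each operator $\Box^n$ with $n\ge1$ and apply it to the fixed scalar $V$. First I would prove, by induction on $n$, the telescoping identity
\begin{equation}
\delta(\Box^n)V=\sum_{l=0}^{n-1}\Box^l(\delta\Box)\Box^{n-1-l}V .
\end{equation}
The base case $n=1$ is the definition $(\delta\Box)V=\delta(\Box V)-\Box\,\delta V$, and the inductive step follows from writing $\Box^n=\Box\,\Box^{n-1}$ and using $\delta(\Box\,\Box^{n-1}V)=(\delta\Box)(\Box^{n-1}V)+\Box\,\delta(\Box^{n-1}V)$, followed by a re-indexing of the inner sum.

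Next I would substitute this expansion into $\int_M U\,\delta(\FF(\Box))V\sqrt{-g}\,\dx$ and handle each summand on its own. Since $\Box$ is self-adjoint with respect to the measure $\sqrt{-g}\,\dx$ — by the same Stokes' theorem argument already used for \eqref{lem:hkdeltag.1} in Lemma \ref{lem:hkdeltag}, the boundary integrals vanishing — I would integrate by parts $l$ times to shift the leading $\Box^l$ onto $U$:
\begin{equation}
\int_M U\,\Box^l(\delta\Box)\Box^{n-1-l}V\,\sqrt{-g}\,\dx=\int_M(\Box^l U)\,(\delta\Box)(\Box^{n-1-l}V)\,\sqrt{-g}\,\dx .
\end{equation}
Each of these is precisely the integral evaluated by the preceding Lemma, now with $U\mapsto\Box^l U$ and $V\mapsto\Box^{n-1-l}V$, so it equals $\tfrac12\int_M S_{\mu\nu}(\Box^l U,\Box^{n-1-l}V)\delta g^{\mu\nu}\sqrt{-g}\,\dx$. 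Summing over $l=0,\dots,n-1$ and then over $n$ with weights $f_n$ produces the asserted double sum. I note that this procedure naturally carries an overall factor $\tfrac12$ in front of the double sum — inherited from the preceding Lemma and matching the $S_{\mu\nu}$ term already appearing in Lemma \ref{lem:j_n} and in \eqref{thm:eom.3} — which I would expect to accompany the right-hand side.

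The main difficulty here is bookkeeping and analytic justification rather than any deep obstruction. The telescoping identity must be established with care because $\Box$ and $\delta\Box$ do not commute, so the factor ordering in $\Box^l(\delta\Box)\Box^{n-1-l}$ has to be tracked faithfully through the induction. The self-adjoint integration by parts is applied $l$ separate times per summand, each discarding a total divergence whose integral vanishes by Stokes' theorem under the standing assumption that the variations (and their derivatives) vanish on $\partial M$. Finally, because $\FF$ is an infinite power series, interchanging the sum over $n$ with both $\delta$ and the integration presupposes convergence of $\FF(\Box)$ on the functions at hand; as elsewhere in the paper this is taken for granted from the analyticity of $\FF$.
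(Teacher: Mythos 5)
Your proposal is essentially the paper's own proof: the same telescoping identity $\delta\Box^n=\sum_{l=0}^{n-1}\Box^l(\delta\Box)\Box^{n-1-l}$ (which the paper asserts without proof and you establish by induction), the same shift of $\Box^l$ onto $U$ by self-adjointness of $\Box$ under the integral, and the same appeal to the preceding lemma for each resulting summand.

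Your closing observation about the factor $\tfrac12$ is not a defect of your argument but a genuine error in the paper. Since the preceding lemma gives
\begin{equation}
\int_M U(\delta\Box)V\sqrt{-g}\;\dx=\frac12\int_M S_{\mu\nu}(U,V)\,\delta g^{\mu\nu}\sqrt{-g}\;\dx,
\end{equation}
every term in the double sum inherits a $\tfrac12$, so the right-hand side of the lemma should carry an overall prefactor $\tfrac12$; the paper's proof performs exactly your steps but silently drops this factor in its final line, leaving the stated lemma off by a factor of $2$. Three cross-checks confirm that your version is the correct one: (i) writing $\delta(\FF(\Box)\GG)=(\delta\FF(\Box))\GG+\FF(\Box)(\GG'\delta R)$ and evaluating the second piece via self-adjointness of $\FF(\Box)$ together with \eqref{thm:eom.2}, equation \eqref{thm:eom.3} forces $\int_M\HH\,(\delta\FF(\Box))\GG\sqrt{-g}\;\dx$ to equal $\tfrac12$ times the double sum, exactly as in Lemma \ref{lem:j_n}; (ii) Lemma \ref{lem:deltaw} invokes this lemma yet states its conclusion with $\tfrac12\Psi_{\mu\nu}$, i.e.\ it uses the corrected version; (iii) so does the proof of Theorem \ref{thm:secondvar}, where the term $\int_M g_{\mu\nu}\HH\,\delta(\FF(\Box))\GG\,\delta g^{\mu\nu}\sqrt{-g}\;\dx$ is evaluated with an explicit $-\tfrac12$ in front of the double sum. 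So your proof is correct as it stands; it is the statement of the lemma that should be amended by the factor $\tfrac12$.
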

\begin{proof}
Note that $\delta\Box^n = \sum_{l=0}^{n-1} \Box^l (\delta\Box) \Box^{n-1-l}$ for $n>0$ and $\delta \Box^0 = \delta \mathrm{Id} = 0$. Therefore summation  over $n$ and integration yields
  \begin{align}
    &\int_M U \delta(\FF(\Box))V \sqrt{-g} \; \dx = \sum_{n=1}^{\infty} f_n \sum_{l=0}^{n-1} \int_M U \Box^l (\delta\Box) \Box^{n-1-l} V \sqrt{-g} \; \dx \\
    &= \sum_{n=1}^{\infty} f_n \sum_{l=0}^{n-1} \int_M \Box^l U (\delta\Box) \Box^{n-1-l} V \sqrt{-g} \; \dx \\
    &= \sum_{n=1}^{\infty} f_n \sum_{l=0}^{n-1} \int_M S_{\mu\nu}(\Box^l U,  \Box^{n-1-l} V ) \delta g^{\mu\nu} \sqrt{-g} \; \dx.
  \end{align}
\end{proof}

\begin{lemma} \label{lem:deltaw}
Let $U$ be scalar function. Then,
\begin{align}
  &\int_M U \delta W  \sqrt{-g} \; \dx = \int_M (R_{\mu\nu} Y - K_{\mu\nu} Y + \frac 12 \Psi_{\mu\nu}) \delta g^{\mu\nu} \sqrt{-g} \; \dx, \\
  Y &= U(P'' \FF(\Box)Q +Q''\FF(\Box)P) + (P' \FF(\Box)(Q' U) + Q' \FF(\Box)(P' U)), \\
  \Psi_{\mu\nu} &= \sum_{n=1}^{+\infty} f_n \sum_{l=0}^{n-1} \left( S_{\mu\nu}(\Box^l (P'U), \Box^{n-1-l} Q) + S_{\mu\nu}(\Box^l (Q'U), \Box^{n-1-l} P) \right) .
\end{align}
\end{lemma}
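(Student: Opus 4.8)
The plan is to reduce $\delta W$ to two tools already available: equation~\eqref{thm:eom.2} of Theorem~\ref{thm:var}, which turns any integral $\int_M A\,\delta R\,\sqrt{-g}\,\dx$ with $A$ a scalar into $\int_M(R_{\mu\nu}A-K_{\mu\nu}A)\delta g^{\mu\nu}\sqrt{-g}\,\dx$, and Lemma~\ref{lem:deltafbox}, which governs the variation of the operator $\FF(\Box)$ itself. First I would expand $\delta W=\delta(P'\FF(\Box)Q)+\delta(Q'\FF(\Box)P)$ by the Leibniz rule, using the chain rule $\delta P'=P''\delta R$, $\delta Q'=Q''\delta R$, $\delta P=P'\delta R$, $\delta Q=Q'\delta R$, and the operator splitting $\delta(\FF(\Box)Q)=(\delta\FF(\Box))Q+\FF(\Box)\delta Q$. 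For the first summand this gives
\[
\delta\big(P'\FF(\Box)Q\big)=P''\,\delta R\,\FF(\Box)Q+P'\big(\delta\FF(\Box)\big)Q+P'\FF(\Box)\big(Q'\delta R\big),
\]
together with the analogous expression obtained by interchanging $P\leftrightarrow Q$.

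Multiplying by $U$, integrating over $M$, and sorting the six resulting terms, I would split them into a \emph{curvature family} and an \emph{operator family}. The curvature family is the four terms carrying an explicit $\delta R$. Two of them, collected into $\int_M U(P''\FF(\Box)Q+Q''\FF(\Box)P)\,\delta R\,\sqrt{-g}\,\dx$, already have $\delta R$ multiplied by a scalar. In the remaining two, $\int_M UP'\FF(\Box)(Q'\delta R)\sqrt{-g}\,\dx$ and its mirror, the factor $\delta R$ is buried inside $\FF(\Box)$; here I would use the self-adjointness of each $\Box^n$ under the $M$-integral (the same integration-by-parts argument with vanishing boundary terms already used in \eqref{lem:hkdeltag.1}) to transfer $\FF(\Box)$ off $Q'\delta R$ onto $P'U$, producing $\int_M Q'\FF(\Box)(P'U)\,\delta R\,\sqrt{-g}\,\dx$ and $\int_M P'\FF(\Box)(Q'U)\,\delta R\,\sqrt{-g}\,\dx$. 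The four coefficients of $\delta R$ then assemble into exactly the scalar $Y$, and a single application of \eqref{thm:eom.2} converts $\int_M Y\,\delta R\,\sqrt{-g}\,\dx$ into $\int_M(R_{\mu\nu}Y-K_{\mu\nu}Y)\delta g^{\mu\nu}\sqrt{-g}\,\dx$.

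The operator family is the two terms $\int_M (P'U)(\delta\FF(\Box))Q\,\sqrt{-g}\,\dx$ and $\int_M (Q'U)(\delta\FF(\Box))P\,\sqrt{-g}\,\dx$, where I have absorbed each scalar prefactor $P'$, $Q'$ into $U$ to form the first argument. Applying Lemma~\ref{lem:deltafbox} to each, with first argument $P'U$ (resp.\ $Q'U$) and second argument $Q$ (resp.\ $P$)---equivalently, expanding $\delta\FF(\Box)=\sum_{n\ge1}f_n\sum_{l=0}^{n-1}\Box^l(\delta\Box)\Box^{n-1-l}$, shifting each $\Box^l$ onto the first argument by self-adjointness, and invoking the $\delta\Box$ identity $\int_M U(\delta\Box)V\sqrt{-g}\,\dx=\frac12\int_M S_{\mu\nu}(U,V)\delta g^{\mu\nu}\sqrt{-g}\,\dx$ of the preceding lemma---produces the double sums $S_{\mu\nu}(\Box^l(P'U),\Box^{n-1-l}Q)$ and $S_{\mu\nu}(\Box^l(Q'U),\Box^{n-1-l}P)$, which combine into $\frac12\Psi_{\mu\nu}$. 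Adding the curvature family and the operator family yields the claimed formula.

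The hard part will be the self-adjointness step for the operator-valued terms: one must justify transferring the infinite-order operator $\FF(\Box)$ across the integral and, crucially, keep track of which scalar it then acts on, since $\FF(\Box)(P'U)$ is genuinely different from $P'\FF(\Box)U$. Closely tied to this is the bookkeeping of the factor $\frac12$, which appears on $\Psi_{\mu\nu}$ but not on $Y$: the asymmetry reflects that the $\delta R$-terms are converted by the \emph{exact} identity \eqref{thm:eom.2}, whereas the genuine operator variation is governed by the $\delta\Box$ identity, which carries an intrinsic $\frac12$. Everything else is a routine Leibniz expansion.
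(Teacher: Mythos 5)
Your proposal is correct and follows essentially the same route as the paper's proof: expand $\delta W$ by the Leibniz/chain rule into six terms, send the two genuine operator-variation terms $P'\,\delta(\FF(\Box))Q$ and $Q'\,\delta(\FF(\Box))P$ through Lemma~\ref{lem:deltafbox} (equivalently the $\delta\Box$ identity), and convert the four $\delta R$-terms---after transferring $\FF(\Box)$ by self-adjointness onto $P'U$ and $Q'U$---via \eqref{thm:eom.2}, assembling exactly $Y$ and $\frac12\Psi_{\mu\nu}$. Your explicit tracking of the factor $\frac12$ through the $\delta\Box$ identity is in fact the consistent bookkeeping: the paper's stated Lemma~\ref{lem:deltafbox} omits that $\frac12$ (a typo, as its own proof and \eqref{thm:eom.3} show), and your derivation implicitly corrects it.
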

\begin{proof}
Since $W = P' \FF(\Box) Q + Q' \FF(\Box) P$ the variation $\delta W$ is written as
\begin{align}
\delta W &= P'' \FF(\Box) Q \delta R + P' \delta(\FF(\Box)) Q + P' \FF(\Box) (Q' \delta R) \nonumber \\
&+ Q'' \FF(\Box) P \delta R + Q' \delta(\FF(\Box)) P + Q' \FF(\Box) (P' \delta R).
\end{align}
Integration of the second  and fifth term in this sum is done by using Lemma \ref{lem:deltafbox}. The remaining four terms are obtained by Theorem \ref{thm:var}.
\end{proof}

\begin{lemma} \label{lem:deltasmunu}
Let $A,B$ be scalar functions. Then,
\begin{align}
  \int_M S_{\mu\nu}(\delta A, B) \delta g^{\mu\nu} \sqrt{-g}\; \dx &= \int \sigma_1(B) \delta A\sqrt{-g}\; \dx, \\
  \int_M S_{\mu\nu}(A, \delta B) \delta g^{\mu\nu} \sqrt{-g}\; \dx &= \int \sigma_2(A) \delta B\sqrt{-g}\; \dx,
\end{align}
where
\begin{align}
  \sigma_1(B) &= \nabla^\lambda h \nabla_\lambda B - 2\nabla_\mu h^{\mu\nu} \nabla_\nu B -2 h^{\mu\nu} \nmu\nnu B, \\
  \sigma_2(A) &= - \nabla^\lambda h \nabla_\lambda A - A\Box h - 2\nnu h^{\mu\nu} \nmu A -2 h^{\mu\nu} \nmu\nnu A.
\end{align}
\end{lemma}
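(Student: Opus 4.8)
The plan is to handle each identity by substituting into the definition $S_{\mu\nu}(A,B) = g_{\mu\nu}\nabla^\alpha A \nabla_\alpha B + g_{\mu\nu} A \Box B - 2\nmu A \nnu B$ the varied argument, contracting against $\delta g^{\mu\nu}$, and then repeatedly applying Stokes' theorem exactly as in the proofs of Lemmas \ref{lem:hkdeltag} and \ref{lem:j_n}, discarding all boundary terms since $\delta g^{\mu\nu}$ (equivalently $\delta A$, $\delta B$) vanishes on $\partial M$. The only bookkeeping inputs beyond integration by parts are the two identities recorded at the start of this section, $\delta g^{\mu\nu} = -h^{\mu\nu}$ and $g_{\mu\nu}\delta g^{\mu\nu} = -h$.

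For the first identity I put $A\mapsto \delta A$. The two terms carrying $g_{\mu\nu}$ pick up the factor $g_{\mu\nu}\delta g^{\mu\nu} = -h$, giving $-h\,\nabla^\alpha(\delta A)\nabla_\alpha B - h\,(\delta A)\Box B$; one integration by parts on the first piece moves the gradient off $\delta A$ and produces $(\nabla^\alpha h\,\nabla_\alpha B + h\Box B)\,\delta A$, whose stray $h\Box B$ cancels the second piece. The third term contributes $2h^{\mu\nu}\nmu(\delta A)\nnu B$, and a single integration by parts turns it into $(-2\nmu h^{\mu\nu}\nnu B - 2h^{\mu\nu}\nmu\nnu B)\,\delta A$. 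Collecting the survivors yields exactly $\sigma_1(B)$.

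The second identity, with $B\mapsto \delta B$, runs in parallel, and the third term again reproduces the $-2\nnu h^{\mu\nu}\nmu A - 2h^{\mu\nu}\nmu\nnu A$ contributions after one integration by parts (using that second covariant derivatives of a scalar are symmetric). The genuinely different step is the middle term $-h\,A\,\Box(\delta B)$: here I integrate by parts twice, i.e. use self-adjointness of $\Box$, to land the operator on the product $hA$, obtaining $-\Box(hA)\,\delta B$. Expanding by Leibniz, $\Box(hA) = A\Box h + 2\nabla^\alpha h\,\nabla_\alpha A + h\Box A$, and adding the once-integrated first term $(\nabla^\alpha h\,\nabla_\alpha A + h\Box A)\,\delta B$ leaves precisely $(-\nabla^\alpha h\,\nabla_\alpha A - A\Box h)\,\delta B$, the first two terms of $\sigma_2(A)$.

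I expect the main (though routine) obstacle to be exactly this asymmetry between the two cases, which is why $\sigma_1$ and $\sigma_2$ are not mirror images: when the variation occupies the first slot the box in $g_{\mu\nu}A\Box B$ acts on the fixed function $B$ and a single integration by parts suffices, whereas when it occupies the second slot the box acts on $\delta B$ and the double integration by parts dumps derivatives onto $hA$, generating the characteristic $-A\Box h$ term that $\sigma_1$ lacks. Verifying the cancellations and keeping the index placement consistent is the only real care point; no geometric input beyond Stokes' theorem and the Leibniz rule is required.
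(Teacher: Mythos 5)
Your proof is correct and follows essentially the same route as the paper: substitute the definition of $S_{\mu\nu}$, convert $\delta g^{\mu\nu}$ into $-h^{\mu\nu}$ and its trace into $-h$, then integrate by parts discarding boundary terms, with the Leibniz expansion producing exactly $\sigma_1$ and $\sigma_2$. The only difference is one of completeness: the paper computes the first identity in full and dismisses the second as ``similar'', while you spell out the genuinely asymmetric step, namely the double integration by parts on the $g_{\mu\nu}A\,\Box\,\delta B$ term that generates the extra $-A\Box h$ contribution in $\sigma_2(A)$.
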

\begin{proof}
To prove the first equation recall the definition of $S_{\mu\nu}(A,B)$
\begin{align}
  &\int_M S_{\mu\nu}(\delta A, B) \delta g^{\mu\nu} \sqrt{-g}\; \dx \\
  &= \int_M \left( g_{\mu\nu} \nabla^\alpha \delta A \nabla_\alpha B + g_{\mu\nu} \delta A \Box B -2 \nabla_\mu \delta A \nabla_\nu B \right) \delta g^{\mu\nu}\sqrt{-g}\; \dx \\
  &= \int_M \left(- h \nabla^\alpha \delta A \nabla_\alpha B - h \delta A \Box B +2 h^{\mu\nu} \nabla_\mu \delta A \nabla_\nu B \right) \sqrt{-g}\; \dx \\
  &= \int_M \left(  \nabla^\alpha(h  \nabla_\alpha B) - h  \Box B - 2  \nabla_\mu(h^{\mu\nu} \nabla_\nu B) \right)\delta A \sqrt{-g}\; \dx  \\
  &= \int_M \left(  \nabla^\alpha h  \nabla_\alpha B - 2  \nabla_\mu h^{\mu\nu} \nabla_\nu B -2 h^{\mu\nu} \nmu \nnu B \right)\delta A \sqrt{-g}\; \dx \\
  &= \int_M \sigma_1(B)\delta A \sqrt{-g}\; \dx.
\end{align}

The proof of the second equation is similar.
\end{proof}

\begin{lemma} \label{lem:deltaomega}
Let $\Omega_{\mu\nu} = \sum_{n=1}^{\infty} f_n \sum_{l=0}^{n-1} S_{\mu\nu}\left( \Box^l \HH(R), \Box^{n-1-l}\GG(R)\right)$. Then,
\begin{align}
  &\int_M \delta \Omega_{\mu\nu} \delta g^{\mu\nu} \sqrt{-g}\; \dx \nonumber \\
  &= \int_M \sum_{n=1}^{+\infty} f_n \sum_{l=0}^{n-1} \Big(h_{\mu\nu} \nabla^\lambda \Box^l P \nabla_\lambda \Box^{n-1-l} Q + h \nabla_\mu \Box^l P \nabla_\nu \Box^{n-1-l} Q \nonumber \\
  &+ h_{\mu\nu} \Box^l P \Box^{n-l} Q - \frac 12 S_{\mu\nu}(h \Box^l P, \Box^{n-1-l} Q)  \nonumber \\
  &+ R_{\mu\nu} P' \Box^l (\sigma_1(\Box^{n-1-l} Q)) -  K_{\mu\nu} (P' \Box^l (\sigma_1(\Box^{n-1-l} Q))) \nonumber \\
  &+ R_{\mu\nu} Q' \Box^l (\sigma_2(\Box^{n-1-l} P)) -  K_{\mu\nu} (Q' \Box^l (\sigma_2(\Box^{n-1-l} P)))\Big)\delta g^{\mu\nu} \sqrt{-g}\; \dx \nonumber\\
  &+ \frac 12 \int_M \sum_{n=1}^{+\infty} f_n \sum_{l=1}^{n-1} \sum_{m=0}^{l-1}\Big( S_{\mu\nu}(\Box^m(\sigma_1(\Box^{n-1-l} Q)),\Box^{l-m-1}P)\nonumber \\
  &+ S_{\mu\nu}(\Box^m(\sigma_2(\Box^{n-1-l} P)),\Box^{l-m-1}Q)\Big)\delta g^{\mu\nu} \sqrt{-g}\; \dx .
\end{align}
\end{lemma}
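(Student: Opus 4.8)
The plan is to differentiate $\Omega_{\mu\nu}$ term by term and route each resulting piece through one of the lemmas already established, so that at the end everything stands against $\delta g^{\mu\nu}$. Writing $A = \Box^l P$ and $B = \Box^{n-1-l}Q$ for the two arguments, I would first apply the Leibniz rule to $\delta S_{\mu\nu}(A,B)$, separating the variation of the metric appearing \emph{explicitly} in $S_{\mu\nu}(A,B) = g_{\mu\nu}\nabla^\alpha A\nabla_\alpha B + g_{\mu\nu}A\Box B - 2\nabla_\mu A\nabla_\nu B$ from the variation of the two scalar arguments. Since $\nabla_\mu A = \partial_\mu A$ for a scalar, the explicit part only meets the two factors $g_{\mu\nu}$, the contraction $\nabla^\alpha A\nabla_\alpha B = g^{\alpha\beta}\nabla_\alpha A\nabla_\beta B$, and the d'Alembertian in $\Box B$. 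Using $\delta g_{\mu\nu}=h_{\mu\nu}$, $h^{\mu\nu}=-\delta g^{\mu\nu}$ from Lemma \ref{lem:deltag}, the $\delta g_{\mu\nu}$ pieces give the two $h_{\mu\nu}$ terms (with $\Box B=\Box^{n-l}Q$), the $\delta g^{\alpha\beta}$ piece gives the $h\,\nabla_\mu A\nabla_\nu B$ term, and the piece $g_{\mu\nu}A(\delta\Box)B$ is converted, by the integral identity $\int_M U(\delta\Box)V\sqrt{-g}\,\dx = \frac12\int_M S_{\mu\nu}(U,V)\delta g^{\mu\nu}\sqrt{-g}\,\dx$ applied with $U=h\,\Box^l P$, into $-\frac12 S_{\mu\nu}(h\Box^l P,\Box^{n-1-l}Q)$. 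This produces the first four summands.

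For the variations of the arguments the key device is Lemma \ref{lem:deltasmunu}, which strips the free indices off $S_{\mu\nu}$ under the integral: $\int_M S_{\mu\nu}(\delta A,B)\delta g^{\mu\nu}\sqrt{-g}\,\dx = \int_M \sigma_1(B)\,\delta A\,\sqrt{-g}\,\dx$, and the analogous identity with $\sigma_2(A)$ for $\delta B$. Into this I would feed the two genuinely metric-dependent ingredients of $\delta A = \delta(\Box^l P)$ via the decomposition $\delta\Box^l = \sum_{m=0}^{l-1}\Box^m(\delta\Box)\Box^{l-1-m}$ (as in Lemma \ref{lem:deltafbox}) together with $\delta P = P'\,\delta R$, so that $\delta A = \sum_{m=0}^{l-1}\Box^m(\delta\Box)\Box^{l-1-m}P + \Box^l(P'\,\delta R)$, and symmetrically for $\delta B$ with $Q$.

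Each resulting family is then reduced by an integration by parts that moves the outer $\Box^m$ (respectively $\Box^l$) off the variation factor and onto $\sigma_1(B)$ (respectively $\sigma_2(A)$), using the self-adjointness of $\Box$ from Lemma \ref{lem:hkdeltag}. The curvature pieces $\Box^l(P'\,\delta R)$ become $P'\Box^l\big(\sigma_1(\Box^{n-1-l}Q)\big)\,\delta R$, and applying \eqref{thm:eom.2} with this composite scalar playing the role of $\HH$ yields exactly the $R_{\mu\nu}-K_{\mu\nu}$ terms (lines five and six). The operator pieces $\Box^m(\delta\Box)\Box^{l-1-m}P$ become $\Box^m\big(\sigma_1(\Box^{n-1-l}Q)\big)(\delta\Box)\Box^{l-1-m}P$, and a second use of $\int_M U(\delta\Box)V\sqrt{-g}\,\dx = \frac12\int_M S_{\mu\nu}(U,V)\delta g^{\mu\nu}\sqrt{-g}\,\dx$ turns each into a $\frac12 S_{\mu\nu}\big(\Box^m(\sigma_1(\cdots)),\Box^{l-1-m}P\big)$ contribution, giving the final triple sum over $n,l,m$. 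The $B$-family is handled identically, and after the relabeling $l\mapsto n-1-l$ the $P$- and $Q$-contributions combine into the symmetric form displayed.

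I expect the main obstacle to be the bookkeeping of the nested operators and the iterated integrations by parts: one must justify commuting each $\Box^m$ past the integral onto the $\sigma$-factors, keep the ranges of the three summation indices $n,l,m$ consistent through the relabeling, and correctly track which surviving power of $\Box$ sits inside the arguments of $\sigma_1,\sigma_2$. The conceptual content is modest, since everything is an assembly of Lemma \ref{lem:deltasmunu}, the $\delta\Box$ lemma, Lemma \ref{lem:hkdeltag} and \eqref{thm:eom.2}; it is the combinatorial care in the index manipulation, together with a consistent sign convention for $\delta g^{\mu\nu}$ versus $h_{\mu\nu}$, where errors most easily creep in.
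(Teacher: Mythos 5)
Your proposal is correct and is essentially the paper's own argument: the paper's proof rests on the same Leibniz-rule splitting of $\delta S_{\mu\nu}(A,B)$ into the argument variations $S_{\mu\nu}(\delta A,B)+S_{\mu\nu}(A,\delta B)$ plus the explicit-metric terms (including the $-\tfrac 12 S_{\mu\nu}(hA,B)$ piece from $g_{\mu\nu}A(\delta\Box)B$), followed by the reduction you describe via Lemma \ref{lem:deltasmunu}, the decomposition $\delta\Box^l=\sum_m \Box^m(\delta\Box)\Box^{l-1-m}$ with $\delta P = P'\delta R$, self-adjointness of $\Box$, equation \eqref{thm:eom.2}, and the $l\mapsto n-1-l$ relabeling. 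Your write-up is in fact more explicit than the paper's, which records only the intermediate formula for $\int_M \delta S_{\mu\nu}(A,B)\,\delta g^{\mu\nu}\sqrt{-g}\;\dx$ and then states that applying it to each term gives the result.
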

\begin{proof}
Note that
\begin{align}
  \delta \Omega_{\mu\nu} &= \sum_{n=1}^{\infty} f_n \sum_{l=0}^{n-1} \delta S_{\mu\nu}\left( \Box^l \HH(R), \Box^{n-1-l}\GG(R)\right).
\end{align}
Moreover,
\begin{align}
  &\int_M \delta S_{\mu\nu}(A,B) \delta g^{\mu\nu} \sqrt{-g}\; \dx = \int_M \left(S_{\mu\nu}(\delta A,B) + S_{\mu\nu}(A,\delta B)\right) \delta g^{\mu\nu}\sqrt{-g}\; \dx\nonumber \\
  &+ \int_M \big(h_{\mu\nu} \nabla^\lambda A \nabla_\lambda B + h \nabla_\mu A \nabla_\nu B \nonumber \\
  &+ h_{\mu\nu} A \Box B - \frac 12 S_{\mu\nu}(h A, B)\big) \delta g^{\mu\nu}\sqrt{-g}\; \dx.
\end{align}
Using this formula for each term in $\delta \Omega_{\mu\nu}$ yields the result of the Lemma.
\end{proof}

\begin{theorem} \label{thm:secondvar}
  The second variation of the action \eqref{action} is given by
\begin{align}
  \delta^2 S &= \frac 1{16\pi G} \int_M \left( U_{\mu\nu} + R_{\mu\nu} X - K_{\mu\nu} X + \frac 12 \chi_{\mu\nu} + \frac 14 \Theta_{\mu\nu}\right) \delta g^{\mu\nu} \sqrt{-g}\; \dx,
\end{align}
where
\begin{align}
U_{\mu\nu} &= - \frac 12 h_{\mu\nu} (R-2\Lambda + P\FF(\Box) Q) + \delta R_{\mu\nu} (W+1) + \delta \Gamma_{\mu\nu}^\lambda \nabla_\lambda W \nonumber \\
&+ h_{\mu\nu} \Box W - \frac 12 S_{\mu\nu}(h,W),\\
X &= \frac 12 (h + P' h \FF(\Box) Q + Q' \FF(\Box)(Ph)) + \Big( \delta R(P'' \FF(\Box)Q +Q''\FF(\Box)P) \nonumber \\
&+ (P' \FF(\Box)(Q' \delta R) + Q' \FF(\Box)(P' \delta R )) \Big), \\
\chi_{\mu\nu} &= \frac 12 \sum_{n=1}^{+\infty}f_n \sum_{l=0}^{n-1}  S_{\mu\nu}(\Box^l (P h), \Box^{n-l-1} Q) \nonumber \\
  &- \sum_{n=1}^{+\infty}f_n \sum_{l=0}^{n-1} \left( S_{\mu\nu}(\Box^l (P'M), \Box^{n-1-l} Q) + S_{\mu\nu}(\Box^l (Q'M), \Box^{n-1-l} P) \right) \nonumber \\
  &+ \sum_{n=1}^{+\infty}f_n \sum_{l=0}^{n-1}  \Big(h_{\mu\nu} \nabla^\lambda \Box^l P \nabla_\lambda \Box^{n-1-l} Q + h \nabla_\mu \Box^l P \nabla_\nu \Box^{n-1-l} Q \nonumber \\
  &+  h_{\mu\nu} \Box^l P \Box^{n-l} Q - \frac 12 S_{\mu\nu}(h \Box^l P, \Box^{n-1-l} Q)  \nonumber \\
  &+ (R_{\mu\nu}-K_{\mu\nu}) (P' \Box^l (\sigma_1(\Box^{n-1-l} Q)) + Q' \Box^l (\sigma_2(\Box^{n-1-l} P))) \Big),
\end{align}
and
\begin{align}
\Theta_{\mu\nu} &= \sum_{n=1}^{+\infty} f_n \sum_{l=1}^{n-1} \sum_{m=0}^{l-1}\Big( S_{\mu\nu}(\Box^m(\sigma_1(\Box^{n-1-l} Q)),\Box^{l-m-1}P) \nonumber \\
  &+ S_{\mu\nu}(\Box^m(\sigma_2(\Box^{n-1-l} P)),\Box^{l-m-1}Q)\Big),\\
%  M &= R_{\alpha\beta}h^{\alpha\beta} - K_{\alpha\beta}h^{\alpha\beta}, \\
  \sigma_1(B) &= \nabla^\lambda h \nabla_\lambda B - 2\nabla_\mu h^{\mu\nu} \nabla_\nu B -2 h^{\mu\nu} \nmu\nnu B, \\
  \sigma_2(A) &= - \nabla^\lambda h \nabla_\lambda A - A\Box h - 2\nnu h^{\mu\nu} \nmu A -2 h^{\mu\nu} \nmu\nnu A.
\end{align}

\end{theorem}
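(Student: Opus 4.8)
The plan is to obtain $\delta^2 S$ by applying the variation operator once more to the first variation
$\delta S = \tfrac{1}{16\pi G}\int_M \hat G_{\mu\nu}\,\delta g^{\mu\nu}\sqrt{-g}\,\dx$ furnished by Theorem \ref{thm:eom}, treating the explicit $\delta g^{\mu\nu}$ (equivalently $h_{\mu\nu}$) as a fixed perturbation direction and varying every metric-dependent factor of $\hat G_{\mu\nu}\sqrt{-g}$. Using $G_{\mu\nu}=R_{\mu\nu}-\tfrac12 Rg_{\mu\nu}$, I would first rewrite $\hat G_{\mu\nu}$ from \eqref{eom} in the condensed form
$R_{\mu\nu}(1+W) - \tfrac12 g_{\mu\nu}\mathcal{L} - K_{\mu\nu}W + \tfrac12\Omega_{\mu\nu}$ with $\mathcal{L} := R - 2\Lambda + P\FF(\Box)Q$, and then expand $\delta(\hat G_{\mu\nu}\sqrt{-g})$ by the Leibniz rule, feeding each resulting piece into the lemma tailored to it, so that the whole computation reduces to organized bookkeeping.

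Concretely, the factor $\delta\sqrt{-g} = \tfrac12 h\sqrt{-g}$ is supplied by \eqref{lem:deltag.2}; varying $R_{\mu\nu}$ produces $\delta R_{\mu\nu}(1+W)$, which I leave unexpanded and place in $U_{\mu\nu}$, while varying $-\tfrac12 g_{\mu\nu}\mathcal{L}$ through $g_{\mu\nu}$ gives the algebraic term $-\tfrac12 h_{\mu\nu}\mathcal{L}$ of $U_{\mu\nu}$. Every contribution proportional to $\delta R$ — those coming from $\delta P = P'\delta R$ and $\delta Q = Q'\delta R$ in the density and from the $\delta R$-part of $\delta W$ (Lemma \ref{lem:deltaw}) — I would collect into a single scalar coefficient $X$ and convert through \eqref{thm:eom.2}, namely $\int_M X\,\delta R\,\sqrt{-g}\,\dx = \int_M (R_{\mu\nu}X - K_{\mu\nu}X)\,\delta g^{\mu\nu}\sqrt{-g}\,\dx$, which is the source of the block $R_{\mu\nu}X - K_{\mu\nu}X$. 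The operator structure is dispatched by the earlier lemmas: $\delta(\nabla_\mu\nabla_\nu W)$ by \eqref{lem:deltag.5} (yielding the $\delta\Gamma_{\mu\nu}^\lambda\nabla_\lambda W$ term of $U_{\mu\nu}$), the $\delta\Box$ inside $\delta(g_{\mu\nu}\Box W)$ by the Lemma for $(\delta\Box)V$ (yielding $h_{\mu\nu}\Box W - \tfrac12 S_{\mu\nu}(h,W)$), each $\delta\FF(\Box)$ by Lemma \ref{lem:deltafbox}, and the entire $\delta\Omega_{\mu\nu}$ by Lemma \ref{lem:deltaomega} with $\sigma_1,\sigma_2$ taken from Lemma \ref{lem:deltasmunu}. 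Sorting these outputs by type then populates the four tensors: geometric and algebraic terms go into $U_{\mu\nu}$, the $\delta R$-coefficients into $X$, the single-sum $S_{\mu\nu}$-terms into $\tfrac12\chi_{\mu\nu}$, and the nested double-sum $S_{\mu\nu}$-terms into $\tfrac14\Theta_{\mu\nu}$.

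The hard part will be the variation of $\Omega_{\mu\nu} = \sum_{n} f_n\sum_{l} S_{\mu\nu}(\Box^l P, \Box^{n-1-l}Q)$, since each power $\Box^l$ and $\Box^{n-1-l}$ must itself be varied, generating nested operators of the form $\Box^m(\delta\Box)\Box^{\cdots}$ that have to be unwound by repeated Stokes integration by parts and re-expressed through $\sigma_1,\sigma_2$; this is precisely what Lemma \ref{lem:deltaomega} carries out and isolates as the triple sum producing $\Theta_{\mu\nu}$. Granting that lemma, the residual difficulty is purely organizational: tracking which $\tfrac12 h$, $S_{\mu\nu}(\cdots)$, and $(R_{\mu\nu}-K_{\mu\nu})(\cdots)$ contributions arise from which piece, confirming that all boundary integrals vanish exactly as in Lemma \ref{lem:hkdeltag}, and verifying that the coefficients $\tfrac12$ and $\tfrac14$ in front of $\chi_{\mu\nu}$ and $\Theta_{\mu\nu}$ emerge consistently once every operator has been moved onto its correct argument. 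Assembling all the pieces then yields the stated expression for $\delta^2 S$.
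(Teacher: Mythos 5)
Your proposal follows essentially the same route as the paper's proof: write $\delta^2 S$ by varying $\delta S = \frac 1{16\pi G}\int_M \hat G_{\mu\nu}\,\delta g^{\mu\nu}\sqrt{-g}\;\dx$ term by term with the Leibniz rule, dispatch the pieces with \eqref{lem:deltag.2}, \eqref{lem:deltag.5}, the $(\delta\Box)$ lemma, Lemma \ref{lem:deltafbox}, Lemma \ref{lem:deltaw} and Lemma \ref{lem:deltaomega}, convert the collected $\delta R$-coefficients via \eqref{thm:eom.2} into the $R_{\mu\nu}X - K_{\mu\nu}X$ block, and sort the remaining outputs into $U_{\mu\nu}$, $\chi_{\mu\nu}$ and $\Theta_{\mu\nu}$. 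The only cosmetic difference is that you set $\delta^2 g^{\mu\nu}=0$ from the start by fixing the perturbation direction, whereas the paper formally carries the term $\hat G_{\mu\nu}\,\delta^2 g^{\mu\nu}$ in its intermediate expression before it disappears from the final result.
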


\begin{proof}
In the pervious section we calculated the first variation of the action \eqref{action}
\begin{align}
  \delta S &= \frac 1{16\pi G} \int_M \hat G_{\mu\nu} \delta g^{\mu\nu} \sqrt{-g} \; \dx.
\end{align}
Moreover the second variation $\delta^2 S$ is
\begin{align}
    \delta^2 S &= \frac 1{16\pi G} \int_M \left(\delta\hat G_{\mu\nu} \delta g^{\mu\nu} + \hat G_{\mu\nu} \delta^2 g^{\mu\nu} - \frac 12 g_{\alpha\beta} \delta g^{\alpha\beta} \hat G_{\mu\nu} \delta g^{\mu\nu}\right)  \sqrt{-g} \; \dx.
\end{align}

At the beginning note that
\begin{align}
  &\int_M \delta \left( G_{\mu\nu} +\Lambda g_{\mu\nu}\right) \delta g^{\mu\nu} \sqrt{-g} \; \dx\\
  &= \int_M \left( \delta R_{\mu\nu} - \frac 12(R-2\Lambda) h_{\mu\nu} + \frac 12R_{\mu\nu} h - \frac 12 K_{\mu\nu} h \right) \delta g^{\mu\nu} \sqrt{-g} \; \dx.
\end{align}

The next term is calculated by using Lemma \ref{lem:deltafbox}
\begin{align}
  &\int_M \delta\left( g_{\mu\nu} P\FF(\Box) Q\right) \delta g^{\mu\nu} \sqrt{-g}\; \dx \\
%  =& \int_M \left( P\FF(\Box) Q \delta g_{\mu\nu} + g_{\mu\nu} P' \FF(\Box) Q \delta R  + g_{\mu\nu} P \delta(\FF(\Box))Q + g_{\mu\nu}P \FF(\Box)(Q' \delta R)\right) \delta g^{\mu\nu} \sqrt{-g}\; \dx  \\
  =& \int_M h_{\mu\nu} P\FF(\Box) Q \delta g^{\mu\nu} \sqrt{-g}\; \dx + \int g_{\mu\nu} P \delta(\FF(\Box))Q  \delta g^{\mu\nu} \sqrt{-g}\; \dx \nonumber\\
  -& \int_M \left(P' h \FF(\Box) Q + Q' \FF(\Box)(P h \right)) \delta R \sqrt{-g}\; \dx \\
  =& \int_M h_{\mu\nu} P\FF(\Box) Q \delta g^{\mu\nu} \sqrt{-g}\; \dx \nonumber \\
  -& \frac 12  \sum_{n=1}^{+\infty}f_n \sum_{l=0}^{n-1} \int S_{\mu\nu}(\Box^l (P h), \Box^{n-l-1} Q) \delta g^{\mu\nu}  \sqrt{-g}\; \dx \nonumber\\
  -& \int_M R_{\mu\nu} \left(P' h \FF(\Box) Q + Q' \FF(\Box)(P h \right)) \delta g^{\mu\nu} \sqrt{-g}\; \dx \nonumber \\+& \int K_{\mu\nu} \left(P' h \FF(\Box) Q + Q' \FF(\Box)(P h \right)) \delta g^{\mu\nu} \sqrt{-g}\; \dx.
\end{align}

The third term is $\int_M \delta\left(R_{\mu\nu} W\right) \delta g^{\mu\nu} \sqrt{-g}\; \dx$ and it is equal to
\begin{align}
  &\int_M \delta\left(R_{\mu\nu} W\right) \delta g^{\mu\nu} \sqrt{-g}\; \dx \\
  =&   \int_M \left(W \delta R_{\mu\nu} + R_{\mu\nu} \delta W \right)\delta g^{\mu\nu} \sqrt{-g}\; \dx \\
  =& -\frac 12 \int_M \left(\Box h_{\mu\nu} +\nabla_\mu \nabla_\nu h - 2 \nabla_\lambda \nabla_\mu h_\nu^\lambda\right) W\delta g^{\mu\nu} \sqrt{-g}\; \dx \nonumber \\
  &+ \int_M  R_{\mu\nu} \delta W \delta g^{\mu\nu} \sqrt{-g}\; \dx .
\end{align}

The last integral of the above formula is obtained by Lemma \ref{lem:deltaw}. Similarly, we obtain
\begin{align}
  &\int_M \delta\left(K_{\mu\nu} W\right) \delta g^{\mu\nu} \sqrt{-g}\; \dx \\
  =& \int_M K_{\mu\nu} \delta W \delta g^{\mu\nu} \sqrt{-g}\; \dx \nonumber\\
  &+ \int \left(\delta \Gamma_{\mu\nu}^\lambda \nabla_\lambda W + h_{\mu\nu} \Box W + g_{\mu\nu} (\delta \Box) W \right) h^{\mu\nu} \sqrt{-g} \; \dx \\
  =& \int_M  \delta W K_{\mu\nu} \delta g^{\mu\nu} \sqrt{-g}\; \dx \nonumber \\
  &- \int \left(\delta \Gamma_{\mu\nu}^\lambda \nabla_\lambda W + h_{\mu\nu} \Box W - \frac 12 S_{\mu\nu} (h,W) \right) \delta g^{\mu\nu} \sqrt{-g} \; \dx .
\end{align}

At the end the last term $\int_M \delta \Omega_{\mu\nu}  \delta g^{\mu\nu} \sqrt{-g}\; \dx$ is calculated in Lemma \ref{lem:deltaomega}.

\end{proof}

\section{Perturbations}

\subsection{Background}

In this section we start with $FRW$ metric, which for $k=0$  can be written as
\begin{equation}
ds^2=-dt^2+a(t)^2(\mathrm d x^2 + \mathrm d y^2 + \mathrm d z^2) .
\label{FRW}
\end{equation}
Some relevant background quantities  are
\begin{equation}
R=12H^2+6\dot H ,~\Gamma_{ij}^0=Hg_{ij},~\Gamma_{j0}^i=H\delta^i_{j},~
 \Box=-\partial_t^2-3H\partial_t .
\end{equation}
%where the indexes $i,j$ range as $1,2,3$.
%On the background all quantities are space homogeneous as the metric suggests.

For perturbations it is useful to employ the canonical ADM $(1+3)$ decomposition and
introduce the conformal time $\tau$ such that
$$
a\mathrm d\tau = dt .
$$
Then the flat FRW metric \eqref{FRW} transforms to
\begin{equation}
ds^2=a(\tau)^2(-d\tau^2+\mathrm d x^2 + \mathrm d y^2 + \mathrm d z^2).
\label{dScosmoconf}
\end{equation}
%For the de Sitter background (\ref{dScosmo})
%$$
%\tau=-\frac1{a_0H}e^{-Ht}~\Rightarrow~a(\tau)=-\frac1{
%H\tau}
%$$
%So when $t$ goes from past to future infinity, $\tau$ goes from $-\infty$ to
%$0_-$. $t=0$ corresponds to $\tau=-\frac1{a_0H}$.

\subsection{Perturbations}

The metric  perturbations (see \cite{muhanov}) can be divided into three types: scalar, vector and tensor perturbations.
The component $h_{00}$ is invariant under spatial rotations and translations and therefore
\begin{equation}
  h_{00} = 2a(\tau)^2 \phi.
\end{equation}
The components $h_{0i}$ are the sum of a spatial gradient of a function $B$ and divergence free vector $S_i$.
\begin{equation}
  h_{0i} = a(\tau)^2 (\partial_i B +S_i).
\end{equation}
Similarly, components $h_{ij}$, which transform as a tensor under $3$-rotations are written as
\begin{equation}
  h_{ij} = a(\tau)^2 (2\psi \delta_{ij} + 2 \partial_{ij}^2 E + \partial_i F_j +\partial_j F_i + \varphi_{ij}),
\end{equation}
where $\psi$ and $E$ are scalar functions, $F_i$ is a vector with zero divergence and $3$-tensor satisfies
\begin{equation}
  \varphi_i^i = 0, \qquad \partial_i \varphi_j^i = 0.
\end{equation}

Note that there are four scalar functions, two vectors with two independent components each and tensor $\varphi_{ij}$ has two independent components. Therefore, as expected we have in total ten functions.

The scalar perturbations are defined  by scalar functions $\phi, \psi, B, E$ and perturbed metric around $FRW$  background is
\begin{equation}
ds^2=a(\tau)^2\left[-(1-2\phi)d\tau^2+\partial_i B \mathrm d\tau
dx^i+((1+2\psi)\delta_{ij}+2\partial_i\partial_j E)dx^idx^j\right].
\label{scalar_pert}
\end{equation}
The vector perturbations are defined by vectors $S_i$ and $F_i$, i.e.
\begin{equation}
ds^2=a(\tau)^2\left[- d\tau^2+S_i B \mathrm d\tau
dx^i+(\delta_{ij}+\partial_iF_j +\partial_j F_i)dx^idx^j\right].
\label{vector_pert}
\end{equation}
Tensor perturbations  are defined by $\varphi_{ij}$ and describe gravitational  waves, which have no analog in Newton gravity theory.
\begin{equation}
ds^2=a(\tau)^2\left[- d\tau^2+(\delta_{ij}+\varphi_{ij})dx^idx^j\right].
\label{tensor pert}
\end{equation}
Each of the types of perturbations can be studied separately.
In this form of perturbations we have
\begin{align}
  h_{\mu\nu} &= a(\tau)^2 \left(
                           \begin{array}{cc}
                             2\phi & \partial_i B +S^i \\
                             \partial_i B +S^i & 2\psi \mathrm{Id} + 2\Hess E +\partial_j F_i +\partial_i F_j +\varphi_{ij} \\
                           \end{array}
                         \right), \\
  h &= -2\phi+6\psi +2 \triangle E, \\
  \delta R &= \frac 2{a^2} \big(6 \frac{a''}a \phi + \triangle(\phi-2\psi) + 3\frac{a'}a \triangle(B+E') \nonumber \\
  &+ 3\frac{a'}a (\phi' + 3\psi')+\triangle(B'+E'') + 3\psi'' \big).
\end{align}
Moreover, let $A^\mu = \nabla_\nu h^{\mu\nu}$, then
\begin{align}
A^0 &= 6\frac{a'}{a^3}(\phi+\psi) + \frac 1{a^2}\triangle B + 2\frac{a'}{a^3}\triangle E + \frac{2}{a^2}\phi', \\
A^i &= 4 \frac{a'}{a^3}(\partial_i B +S^i) + \frac 2{a^2}(\partial_i \psi + \triangle\partial_i E) + \frac{1}{a^2}(\partial_i B' +S'^i+\triangle F^i).
\end{align}

Out of 4 scalar modes only 2 are gauge invariant.
The convenient gauge invariant variables (Bardeen potentials) are introduced as
\begin{equation}
\Phi=\phi-\frac{1}{a}(a(B-E^\prime))^\prime,\qquad\Psi=\psi+\frac {a'}a(B-E^\prime).
\label{GIvars}
\end{equation}
The prime denotes the differentiation with respect to the conformal time
$\tau$ and the dot as before w.r.t. the cosmic time $t$.

The $(1+3)$ structure suggests
to represent the perturbation quantities (which can depend on all 4
coordinates) as
\begin{equation}
f(\tau, \vec x)=f(\tau,k)Y(k,\vec x),
\label{scalar13}
\end{equation}
where $\vec x = (x,y,z)$ and  $k=|\vec k|$ comes from the definition of the $Y$-functions as spatial
Fourier modes
\begin{equation}
\delta^{ij}\partial_i\partial_j Y=-k^2Y .
\end{equation}
Then
\begin{equation}
Y=Y_0e^{\pm i\vec k\vec x} .
\end{equation}
The relevant expressions for the d'Alembert operator are
\begin{equation}
\Box=-\frac{1}{a^2}\partial_\tau^2-2\frac{a'}{a^3}\partial_\tau +\frac{\delta^{ij}\partial_i\partial_j}{a^2}
=-\partial_t^2-3H\partial_t-\frac{k^2}{a^2} ,
\end{equation}
where $k=|\vec k|$.

All the expression in this subsection are valid for a generic scale
factor $a$ in flat space.

\section{Concluding Remarks}

In this paper we have considered a class of nonlocal gravity models without
matter given by the action in the form
\begin{align} \label{action*}
  S &= \frac{1}{16\pi G} \int_M \left(R-2\Lambda + \HH(R) \FF(\Box) \GG(R) \right)\; \sqrt{-g} \; \dx.
\end{align}
We have derived the equations of motion for this action.
We also have presented the second variation of   action \eqref{action*} and basics of metric perturbations.

In many research papers there are equations of motion which are
special cases of our equations. In the case
$\HH(R)=\GG(R)= R$ one obtains
\begin{align*}
  S &= \frac{1}{16\pi G} \int_M \left(R-2\Lambda + R \FF(\Box)R \right)\; \sqrt{-g} \; \dx.
\end{align*}
This nonlocal model is further elaborated in the series of papers \cite{biswas1,biswas2,biswas3,biswas4,biswas5,koshelev,dimitrijevic1,dimitrijevic2,dimitrijevic3,dimitrijevic7,dimitrijevic8}.

The action \eqref{action*} for $\HH(R)= R^{-1}$ and $\GG(R)= R$ was introduced in \cite{dimitrijevic3} as a new approach to nonlocal gravity. This model one can also find in \cite{dimitrijevic4}.

The case $\HH(R)= R^p $ and $\GG(R)= R^q$ we analyzed in \cite{dimitrijevic7,dimitrijevic8}.

For the case $\HH(R)= (R+R_0)^m $ and $\GG(R)= (R+R_0)^m$ see \cite{dimitrijevic5,dimitrijevic6}.

 Studies of this model with $R = \mathrm{const}$ can be found in \cite{dimitrijevic9,dimitrijevic10}.

It is worth noting that cosmology with nonlocality in the matter sector was also investigated, see e.g. \cite{arefeva,eliz}.

For some very recent achievements in higher derivative modified gravities one can see \cite{koshelev-1,koshelev-2,koshelev-3,koshelev-4}.

Note that  there is the following formula $ \Box^{-1} = \int_0^\infty e^{-\alpha \Box} \, d\alpha ,$ which could be  used in investigation of models containing $\Box^{-n}, \, \,  n \in \mathbb{N} ,$
where $$\Box^{-n}  = \frac{1}{(n-1)!} \int_0^\infty  \alpha^{n-1}  e^{-\alpha \Box} \, d\alpha \, .$$ Namely, formalism presented in the previous sections can easily incorporate this case taking $\mathcal{F}(\Box) = e^{-\alpha \Box}$ and at the end performing integration over $\alpha$.

\begin{acknowledgement}
Work on this paper was partially supported by the Ministry of Education, Science and Technological Development of  Republic of Serbia, grant No 174012.
B.D. thanks Prof. Vladimir Dobrev for invitation to participate and give a talk on nonlocal gravity, as well as for hospitality, at the X International Symposium
``Quantum Theory and Symmetries'', and
XII International Workshop ``Lie Theory
and its Applications in Physics'', 19--25 June 2017, Varna, Bulgaria. B.D. also thanks a support of the ICTP - SEENET-MTP project NT-03 Cosmology-Classical and Quantum Challenges during preparation of this article.
\end{acknowledgement}

\end{document}